\theoremstyle{remark}
\newtheorem*{problem*}{Problem}
\newtheorem*{theorem*}{Theorem}
\DeclareMathOperator{\poly}{\mathrm{poly}}
\DeclareMathOperator{\polylog}{\mathrm{polylog}}
\let\eps\varepsilon
\def\reals{\mathbb{R}}
\renewcommand{\P}{\mathcal{P}}
\DeclareMathOperator{\disk}{\mathrm{disk}}
\newcommand{\old}[1]{{}}
\let\eps\varepsilon
\newcommand{\ddF}{d_{\textnormal{dF}}}
\newcommand{\dF}{d_{\textnormal{dF}}}
\newcommand{\Tfvd}{T_{\textsc{fvd}}}
\newcommand{\Tvd}{T_{\textsc{vd}}}
\newcommand{\Tds}{T_{\textsc{ds}}}
\newcommand{\Tannu}{T_{\textsc{annu}}}
\newcommand{\Tedge}{T_{\textsc{edge}}}
\newcommand{\boo}[3]{\textcolor{#1}{#2: #3}}
\newcommand{\indu}{\boo{red}{Indu}}
\title{%
  Discrete Fr\'echet Distance Oracles\thanks{A preliminary version of this work will appear in \emph{SoCG'24} \cite{full-SoCG-no-arXiv}.}
}
\author{Boris Aronov}{Department of Computer Science and Engineering, Tandon School of Engineering, New York University, Brooklyn, NY 11201 USA.}
{boris.aronov@nyu.edu}
{https://orcid.org/0000-0003-3110-4702}
{Work partially supported by NSF Grant CCF-20-08551.  Part of the work was done while visiting Institute of Science and Technology Austria.}
\author{Tsuri Farhana}{Department of Computer Science, Ben Gurion University, Beer Sheva, Israel.}{tsurif@post.bgu.ac.il}{}{}
\author{Matthew J. Katz}
{Department of Computer Science, Ben Gurion University, Beer Sheva, Israel.}
{matya@cs.bgu.ac.il}
{https://orcid.org/0000-0002-0672-729X}
{Work partially supported by Grant 2019715/CCF-20-08551 from the US-Israel Binational Science Foundation/US National Science Foundation.}
\author{Indu Ramesh}{Department of Computer Science and Engineering, Tandon School of Engineering, New York University, Brooklyn, NY 11201 USA.}{ir914@nyu.edu}{https://orcid.org/0009-0008-9967-0819}{Work supported by a Tandon School of Engineering Fellowship and by NSF Grant CCF-20-08551.}
\titlerunning{Discrete Fr\'echet Distance Oracles}
\authorrunning{B. Aronov, T. Farhana, M.J. Katz, and I. Ramesh}
\keywords{discrete Fr\'echet distance; distance oracle; heavy-path decomposition; $t$-local graphs}%
\begin{document}

\maketitle

\begin{abstract}
It is unlikely that the discrete Fr\'echet distance between two curves of length $n$ can be computed in strictly subquadratic time.
We thus consider the setting where one of the curves, $P$, is known in advance. 
In particular, we wish to construct data structures (\emph{distance oracles}) of near-linear size that support efficient distance queries with respect to $P$ in sublinear time. Since there is evidence that this is impossible for query curves of length $\Theta(n^\alpha)$, 
for any $\alpha > 0$, we focus on query curves of (small) constant length, for which we are able to devise distance oracles with the desired bounds.

We extend our tools to handle subcurves of the given curve, and even arbitrary vertex-to-vertex subcurves of a given geometric tree. That is, we construct an oracle that can quickly compute the distance between a short polygonal path (the query) and a path in the preprocessed tree between two query-specified vertices. Moreover, we define a new family of geometric graphs, \emph{$t$-local} graphs (which strictly contains the family of geometric spanners with constant stretch), for which a similar oracle exists: we can preprocess a graph $G$ in the family, so that, given a query segment and a pair $u,v$ of vertices in $G$, one can quickly compute the smallest discrete Fr\'echet distance between the segment and any $(u,v)$-path in $G$.  The answer is exact, if $t=1$, and approximate if $t>1$.
\end{abstract}

\section{Introduction}
\label{sec:intro}

The continuous Fr\'echet distance is often used as a measure of similarity between curves~\cite{AltG95}. The discrete Fr\'echet distance~\cite{EiterM94} is sometimes viewed as a simplified version of the (continuous) Fr\'echet distance, but it is also the preferred version in some application domains, such as protein alignment (see, e.g.,~\cite{WylieZ13}).

Let $A=(a_1,\ldots,a_m)$ and $B=(b_1,\ldots,b_n)$ be two sequences of points in $\reals^d$ representing polygonal curves. A \emph{(monotone) walk} of $A$ and $B$ is a sequence of pairs $(c_1,\ldots,c_l)$, where (i) $c_1=(a_1,b_1)$, (ii) $c_l=(a_m,b_n)$, and (iii) the pair succeeding $c_k=(a_i,b_j)$, for $1 \le k < l$, is one of the following: $c_{k+1}=(a_{i+1},b_j)$ (when $i < m$), $c_{k+1}=(a_i,b_{j+1})$ (when $j < n$), or $c_{k+1}=(a_{i+1},b_{j+1})$ (when $i < m$ and $j < n)$.
Each pair $c_k=(a_i,b_j)$ in a walk $(c_1,\ldots,c_l)$ of $A$ and $B$ yields a distance $\|a_i-b_j\|$, and the \emph{cost} of the walk is the maximum of these distances. The \emph{discrete Fr\'echet distance} between $A$ and $B$, denoted $\ddF(A,B)$, is the minimum over the cost of all walks of $A$ and $B$.

The discrete Fr\'echet distance between $A$ and $B$ can be computed in roughly $O(mn)$ time~\cite{AgarwalAKS14,EiterM94}. It is unlikely that it can be computed exactly, or even approximated within a factor less than 3, in strictly subquadratic time~\cite{Bringmann14,BringmannM16,BuchinOS19}. It is therefore natural to ask whether one can do better when, e.g., one of the curves is given in advance.
Indeed, let $G$ be a geometric graph, that is, $G$'s vertices correspond to points in the plane, and the weight of an edge of $G$ is the Euclidean distance between the points represented by its vertices. Denote the set of paths from $u$ to $v$ in $G$, where $u$ and $v$ are vertices of $G$, by~$\P_G(u,v)$. (If $G$ is a tree, then $\P_G(u,v) = \{\Pi_{uv}\}$, where $\Pi_{uv}$ is the unique path in $G$ from $u$ to $v$.)
The \emph{discrete Fr{\' e}chet distance between a polygonal curve $Q$ and $G$} (with respect to $u$ and $v$) is $\min_{\Pi \in \P_G(u,v)} \ddF(Q,\Pi)$, and we denote this distance by $\ddF(Q,\P_G(u,v))$.
Now, assume that we are expecting a stream of polygonal curves $Q_1,Q_2,\ldots$,
each with a corresponding 
pair $(u_i,v_i)$ of vertices of $G$, and for each arriving $Q_i$ we need to compute the distance $\ddF(Q_i,\P_G(u_i,v_i))$. 
We thus wish to construct a compact data structure based on $G$, so that given a query curve $Q$ and two vertices $u$ and $v$ of $G$, one can compute $\ddF(Q,\P_G(u,v))$ efficiently. In other words, we wish to construct a \emph{distance oracle} for $G$.

To construct such a data structure, we focus on the case where the curves $Q_i$ are of constant size, i.e., consist of a constant number of vertices (at the other extreme, when queries have size $\Theta(n^\alpha)$, for $\alpha>0$, it may be impossible to gain anything by polynomial-time preprocessing \cite{GudmundssonSW23, BringmannKN21}). In this case, the challenge is to construct a near-linear size data structure such that given a curve $Q$, one can compute $\ddF(Q,\P_G(u,v))$ in sublinear time. We identify several rather general settings where this is possible. Specifically, if $G$ is a tree with $n$ nodes, we can process query curves of size up to three in $O(\polylog n)$ time and curves of size four in $O^*(n^{1/2})$ time; the $O^*(\cdot)$ notation hides subpolynomial factors. (We get slightly better bounds for the special case where the tree is actually a polygonal curve.) Moreover, we define a class of geometric graphs, called \emph{1-local} graphs, which includes the Delaunay graph, for which we can answer segment queries in $O^*(n^{1/2})$ time.

\subparagraph*{Our results.}
We first formally state the main problem studied in this paper.

\begin{problem*}[Distance Oracle]
Let $G$ be a geometric graph.
Construct a compact data structure such that, given a query polygonal curve $Q$ of length (i.e., number of vertices) $k$ and two vertices $u$ and $v$ of $G$, one can quickly compute $\ddF(Q,\P_G(u,v))$.
\end{problem*}

We assume that the sets of points underlying $G$ and $Q$ are in 
the plane, and we focus on the case where $k$ is a small constant, often between two and four.  
We consider three main versions of the problem, depending on the graph $G$.

\subparagraph*{(i) \texorpdfstring{$G$}{G} is a polygonal curve \texorpdfstring{$P$}{P} of length \texorpdfstring{$n$}{n}.} 
    This is the most basic version of the problem; we summarize the results in Figure~\ref{tbl:results}. We state running times for both decision and optimization algorithms, depending on the number $k$ of vertices in the query curve.   
    For $k=2$ (i.e., directed segments) and for $k=3$ (i.e., three-vertex curves), we construct data structures of size $O(n\log n)$, so that $\ddF(P,Q)$ can be computed in $O(\log^3 n)$ time, see Sections~\ref{sec:oracle-2}~and~\ref{sec:oracle-3}, respectively. In
    Section~\ref{sec:oracle-4}, 
    we describe a data structure of size $O^*(n)$ for $k=4$ (i.e., four-vertex curves), so that $\ddF(P,Q)$ can be computed in $O^*(n^{1/2})$ time. 
 In each of these cases, one can restrict the query to a vertex-to-vertex subcurve of $P$, specified at query time.

\begin{figure}
\centering
\begin{tabular}{ |c|c|c|c|c| } 
 \hline
size of $Q$ & \multicolumn{2}{c|}{decision problem} & \multicolumn{2}{c|}{optimization problem} \\ 
 $k$ & for $P$ & for subcurve of $P$ & for $P$ & for subcurve of $P$ \\ 
 \hline
 1 & $O(\log n)$ & $O(\log^2 n)$ & $O(\log n)$ & $O(\log^2 n)$\\ 
 2 & $O(\log^2 n)$ & $O(\log^2 n)$ & $O(\log^2 n)$ & $O(\log^2 n)$ \\ 
 3 & $O(\log^2 n)$ & $O(\log^2 n)$ & $O(\log^3 n)$ & $O(\log^3 n)$ \\
 4 & $O^*(n^{1/2})$ & $O^*(n^{1/2})$ &  $O^*(n^{1/2})$ &  $O^*(n^{1/2})$ \\
 \hline
\end{tabular}
\caption{Distance oracles for curves. Decision problem answers questions of the form: ``given $Q$ and $r$, is $\ddF(P,Q)\leq r$?'' Optimization problem computes the discrete Fr\'echet distance. We also offer a variant where at query time one can restrict the query to an arbitrary vertex-to-vertex subcurve of~$P$.}
\label{tbl:results}
\end{figure}

The case where the query curves are line segments was studied by Buchin et al.~\cite{BuchinHOSSS22} for the continuous (rather than discrete) Fr\'echet distance. They presented an $O(n\kappa^{3+\eps} + n^2)$-size data structure, where $\kappa \in [1,n]$ is a parameter set by the user and $\eps > 0$ is an arbitrarily small constant, such that given a query segment $ab$ one can compute the Fr\'echet distance between $ab$ and $P$ in $O((n/\kappa)\log^2 n)$ time (alternatively, between $ab$ and a point-to-point subcurve of $P$, specified at query time, in $O((n/\kappa)\log^2 n + \log^4 n)$ time). Thus, to achieve polylogarithmic query time, they need a data structure of size roughly $O(n^4)$, in contrast to $O(n\log n)$ for the discrete Fr\'echet distance (see Sections~\ref{sec:oracle-2}--\ref{sec:oracle-3}). 

It is not surprising 
that the bound on the size of the data structure that we obtain in the case of segment queries is much better than the bound of Buchin et al.~\cite{BuchinHOSSS22}. On the other hand, it is somewhat surprising that one can obtain a polylogarithmic bound (for queries of up to three vertices) and a sublinear bound (for four-vertex queries), using near-linear space (see Figure~\ref{tbl:results}).

Our results for the curve version are relatively technical and we defer them to Section~\ref{sec:curves}. They form the basis for the following, more general version.

\subparagraph*{(ii) \texorpdfstring{$G$}{G} is a tree \texorpdfstring{$T$}{T} with \texorpdfstring{$n$}{n} nodes.}
The main idea is to decompose $T$ into heavy paths~\cite{heavy} and use the aforementioned curve oracles. We show in Section~\ref{sec:trees} that one can construct discrete Fr\'echet distance oracles for a tree with $n$ vertices with query times $O(\log^3 n)$, $O(\log^3 n)$, $O(\log^4 n)$, and $O^*(\sqrt{n})$ for query sizes one, two, three, and four, respectively, where
the structures require $O(n\log n)$ space for query sizes up to three and $O^*(n)$ for query size four.   

\subparagraph*{(iii) 
\texorpdfstring{$G$}{G} is a local graph with \texorpdfstring{$n$}{n} vertices and \texorpdfstring{$m$}{m} edges.} 
Let $t \ge 1$ be a real parameter. We say that $G$ is \emph{$t$-local} if the following condition holds:
For any disk $D$ and for any two points $p,q \in P \cap D$ (where $P$ is the point set underlying $G$), there exists a path in $G$ between $p$ and $q$ that does not exit $tD$, where $tD$ is the disk obtained from $D$ by scaling it by a factor of $t$ around its center, that is, $p$ and $q$ are connected in the subgraph of $G$ that is induced by the set $P \cap tD$. 
We say that a graph is \emph{local} if it is $t$-local for some constant $t \ge 1$.

In Section~\ref{sec:graphs}, we first show that the class of local graphs \emph{strictly} contains the class of geometric spanners. Next, we show that any 1-local graph contains the Delaunay triangulation (which itself is 1-local). We construct an $O^*(n+m)$-size distance oracle for a given 1-local graph (and in particular an $O^*(n)$-size oracle for the Delaunay triangulation), which handles a segment query in $O^*(n^{1/2})$ time.
When $t > 1$, the oracle returns an approximation of the requested distance which depends on $t$.

\subparagraph*{More related work.}

We restrict our discussion of related work to distance oracles. In general, most of the related work deals with the continuous (rather than discrete) Fr\'echet distance, and with the construction of approximate oracles that return an approximation of the requested distance (rather than the exact distance). All the results below are for the continuous Fr\'echet distance unless mentioned otherwise.

As for exact oracles,
we already mentioned the result of Buchin et al.~\cite{BuchinHOSSS22} for arbitrary segment queries with respect to a given curve $P$. For earlier results geared to horizontal segment queries see~\cite{BergMO17,GudmundssonRSW21,BuchinHOSSS22}. 
Recently, Cheng and Huang~\cite{ChengH23} described a distance oracle for $k$-vertex query curves of size $O(kn)^{\poly(d,k)}$ than can process a query with respect to a point-to-point subcurve of $P$ (specified at query time) in time $O((dk)^{O(1)}\log(kn))$.

As for approximate oracles,
Filtser and Filtser~\cite{FiltserF21} construct a $(1 + \eps)$-approximate distance oracle for a given $n$-vertex curve $P$ and $k$-vertex query curves. Its size is $O(\frac{1}{\eps})^{kd}\log\frac{1}{\eps}$ 
and it computes a $(1+\eps)$-approximation of the discrete Fr\'echet distance between $P$ and a $k$-vertex query curve in ${O}^*(kd)$ time. %
Driemel and Har-Peled~\cite{DriemelH13} present a $(1+\eps)$-approximate distance oracle for segment queries (i.e., $k=2$) of size $O((\frac{1}{\eps})^{2d} \cdot \log^2 \frac{1}{\eps})$ and query time $O(d)$. 
They also consider the version in which the query is with respect to a point-to-point subcurve of $P$, specified at query time. For this version, the size of their data structure is $O(n (\frac{1}{\eps})^{2d} \cdot \log^2 \frac{1}{\eps})$ and the query time is $O(\eps^{-2} \log n \log \log n)$. Filtser~\cite{Filtser18} considered the latter version for the discrete Fr\'echet distance. By adapting techniques from Driemel and Har-Peled, she constructed a data structure of the same size and query time $O(\log n)$. 
Finally, for general $k$, Driemel and Har-Peled~\cite{DriemelH13} provide a constant-factor approximate distance oracle of size $O(nd \log n)$, which can answer distance queries between any
subcurve of $P$ and a $k$-curve query in $O(k^2 d \log n \log(k \log n))$ time.

A problem closely related to ours is the following.
Construct a compact data structure for a geometric graph $G$, such that given a query polygonal curve $Q$ of length $k$ one can quickly compute the minimum Fr\'echet distance between $Q$ and \emph{any} vertex-to-vertex path in~$G$. This is the query version of the well-known \emph{map matching} problem.
Gudmundsson and Smid~\cite{GudmundssonS15} studied the problem for a $c$-packed tree $T$. (A set of edges is $c$-packed if for any disk the total length of the portions of the edges contained in the disk is at most $c$ times the radius of the disk.) More precisely, they studied a corresponding decision problem with some additional restrictions.
Recently, Gudmundsson et al.~\cite{GudmundssonSW23} studied this problem for $c$-packed graphs. As an intermediate result, they construct a data structure of size $O(c\,m \log m)$ for a $c$-packed graph $G$ of complexity $m$, so that given a pair of query vertices $u$ and $v$, one can return in $O(\log m)$ time a 3-approximation of the Fr\'echet distance between $Q$ and $\P_G(u,v)$. The preprocessing time is $O(c\,m^2 \log^2 m)$.

\section{Distance oracles for trees}
\label{sec:trees}

A geometric tree is a tree whose vertices are points in the plane and whose edges are line segments connecting the corresponding points. In this section, we construct a discrete Fr\'echet distance oracle for a given geometric tree~$T$. In other words, we describe how to preprocess a tree~$T$ on $n$ vertices, so that, given a polygonal curve~$Q$ of size~$k$ and two vertices~$u$ and~$v$ of~$T$, one can efficiently compute the discrete Fr\'echet distance between~$Q$ and the path~$\Pi_{uv}$ in~$T$ from~$u$ to~$v$ (which is a polygonal curve), that is, one can quickly return $\dF(Q,\Pi_{uv})$.

As mentioned, we focus on the case where $k$ is a small constant. In this case, one can compute $\dF(Q,\Pi_{uv})$ without any preprocessing in linear time, so the goal is to do it in sublinear time after some preprocessing. More precisely, we only allow near-linear time preprocessing and storage. 

We present a reduction of our problem (discrete Fr\'echet distance oracle for trees) to that for polygonal curves. Specifically, assuming we already know how to construct a discrete Fr\'echet distance oracle for a polygonal curve and queries of size at most $k$, we construct a discrete Fr\'echet distance oracle for $T$ and queries of size $k$, at the cost of an additional logarithmic factor in the query bound. In Section~\ref{sec:curves}, we obtain discrete Fr\'echet distance oracles for polygonal curves that accept queries of size one, two, three, and four.
Thus, by our reduction, we immediately obtain the corresponding discrete Fr\'echet distance oracles for trees.

\subparagraph*{Black box: Distance oracle for curves.}
Fix $k\in\mathbb{N}$. 
Assume we have a black box that preprocesses a polygonal curve $P=(p_1,\ldots,p_n)$ in near-linear time such that, given a subcurve of $P$ between vertices $i$ and $j$, denoted $P[i,j]$ with $1 \le i \le j \le n$, and a query curve $Q$ of size at most $k$, it computes the discrete Fr\'echet distance between $Q$ and $P[i,j]$ in $t_k(n)$ time. We assume that $t_{k-1}(n) \le t_k(n)$.
Section~\ref{sec:curves} presents an implementation of the black box for query size up to four.

\subsection{Data structures}
\label{sec:tree-ds}

Let $T=(V,E)$ be a geometric tree on $n$ vertices. We first pick a root of $T$ arbitrarily and decompose $T$ into \emph{heavy paths}~\cite{heavy}.
The \emph{heavy-path decomposition} of a rooted tree~$T$ has the following properties: it is a collection of ``heavy paths;'' each heavy path is a (possibly degenerate) subpath of a leaf-to-root path in~$T$, beginning at a leaf; the top endpoint of each path (unless it is the root of~$T$) links to a node in another heavy path, in such a way that for any two vertices $u,v$ of $T$ the path between them in~$T$ switches between at most $O(\log n)$ heavy paths; every link in $T$ is either a heavy-path link or a link between the top node of a heavy path and its parent in~$T$. See Figure~\ref{fig:heavy}. %

\begin{figure}[ht] 
    \centering
    \includegraphics[scale=0.8]{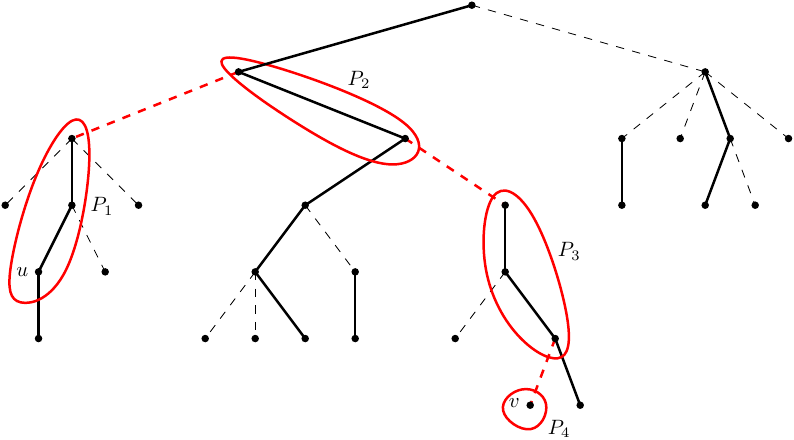}
    \caption{The heavy-path decomposition. The heavy paths are drawn in bold; the path $\Pi_{uv}$ is the concatenation of the subpaths $P_1,\ldots,P_4$.}
    \label{fig:heavy}
\end{figure}

Recall that, for vertices $u,v$ of the $T$, $\Pi_{uv}$ denotes the path from $u$ to $v$ in~$T$. Given $u,v$, one can compute the list of the $O(\log n)$ subpaths (of the decomposition's paths) whose concatenation is $\Pi_{uv}$ in $O(\log n)$ time, where each subpath is specified by the indices of its endpoints, as shown in \cite{heavy}. (The simpler data structure in \cite{heavy} supporting $O(\log^2 n)$-time query is sufficient for our purposes, as this is not the bottleneck in our approach.)

Next, for each path in the decomposition, we construct a discrete Fr\'echet distance oracle for polygonal curves for queries of size at most $k$.

\subsection{The Algorithm}
\label{sec:tree-alg}

Let $u,v\in V$ and let $Q=(q_1, q_2, \dots, q_k)$. We describe how to compute $\dF(\Pi_{uv},Q)$, using the data structures 
above.  
We begin by computing the representation of $\Pi_{uv}$ as the concatenation of $O(\log n)$ subpaths $P_1,P_2,\dotsc,P_m$ (i.e., $\Pi_{uv} = P_1 \cdot P_2 \cdot \ldots \cdot P_m$, where ``$\cdot$'' denotes concatenation of paths).
If $k=1$, then $Q=(q_1)$ and %
\begin{equation}
\dF(\Pi_{uv},(q_1))= \max\limits_{i\in\{1,\dotsc,m\}}\{\dF(P_i,(q_1))\} , 
\label{Tree base case}
\end{equation}
which can be computed in time $mt_1(n)=O(t_1(n)\log n)$.

For $k>1$, consider an optimal walk of $\Pi_{uv}$ and $Q$, and let $P_j$ be the last subpath to which $q_1$ is assigned, let $q_\ell$ be the last point of $Q$ that is assigned to $P_j$ (it is possible that $\ell = 1$), and let $q_{\ell'}$ be the first point of $Q$ that is assigned to $P_{j+1}$. Then $1 < \ell' \le k$ and $\ell' \in \{\ell,\ell+1\}$, and
        \begin{equation}
            \begin{aligned}
            \dF(\Pi_{uv},(q_1\dots q_k)) ={}  \max\{&\dF(P_1\cdot P_2\cdot\ldots\cdot P_{j-1},(q_1)), \\
                                        &\dF(P_j,(q_1,\dots,q_\ell)), \\
                                        &\dF(P_{j+1}\cdot\ldots\cdot P_m,(q_{\ell'},\dots,q_k)) \}.
        \end{aligned}
        \label{Tree case 2}
        \end{equation}

\subparagraph*{A recursive algorithm.} 
If $k=1$, return $\max\limits_{i\in\{1,\dotsc,m\}}\{\dF(P_i,(q_1))\}$, according to Eq.~\eqref{Tree base case}.
Otherwise, according to Eq.~\eqref{Tree case 2}, for each $j\in \{1,\dotsc,m\}$ and for each $\ell\in \{1,\dotsc,k\}$ and $\ell' \in\{\ell,\ell+1\}$, $1 < \ell' \le k$,
compute 
$\max\{\dF(P_1\cdot P_2\cdot\ldots\cdot P_{j-1},(q_1)),
\dF(P_j,(q_1,\dots,q_\ell)),
\dF(P_{j+1}\cdot\ldots\cdot P_m,(q_{\ell'},\dots,q_k)) \}$
(when $j+1>m$ or $j-1<1$, we ignore the relevant component) and return the smallest of all these values.

\subparagraph*{Dynamic programming procedure.} Our algorithm computes the values in Eqs.~\eqref{Tree base case} and~\eqref{Tree case 2} bottom up. For each subpath $P_j$ of $P$ and $Q[\ell_1,\ell_2]$ of $Q$, we calculate $\dF(P_j,(q_{\ell_1},\dots,q_{\ell_2}))$ using the curve oracle black box. Since $k$ is a constant, these calculations take $O(t_k(n)\log n+t_{k-1}(n)\log n+\dotsb+ t_1(n)\log n)=O(t_k(n)\log n)$, since we assumed $t_{k-1}(n)\le t_k(n)$. 

Then, for each $1\le j_1 < j_2\le m$ and for each $1\le \ell \le k$, we calculate $\dF(P_{j_1}\cdot\ldots\cdot P_{j_2},(q_\ell))$. Again, we calculate the values bottom up, starting from $j_2-j_1=1$ (computing $\dF(P_{j_1}\cdot\ldots\cdot P_{j_2+1},(q_\ell))$ takes $O(1)$ time if the answer to $\dF(P_{j_1}\cdot\ldots\cdot P_{j_2},(q_\ell))$ and $\dF(P_{j_2+1},(q_\ell))$ is known). We calculate $O(\log^2 n)$ values in $O(1)$ time each, so this step takes $O(\log^2 n)$ time in total.

Next, we calculate $\dF(P_{j_1}\cdot\ldots\cdot P_m,(q_{\ell_1},\dots, q_k))$ for each $1\le j_1 <  m$ and each $1\le\ell_1\le k$, starting from $j_1 = m-1$ and $\ell_1=k$, then $j_1 = m-1$
and $\ell_1=k-1$, etc.
Note that if all ``smaller'' values are already calculated, computing 
\begin{multline*}
    \dF(P_{j_1}\cdot\ldots\cdot P_m,(q_{\ell_1},\dots,q_k)) ={}\\
    \min\limits_{\substack{j\in\{j_1,\dots, m\}\\ \ell\in\{\ell_1,\dots,k\}\\ \ell'\in\{\ell, \ell+1\}\land \ell<\ell'\le k}} \left\{
    \begin{aligned}
    \max\{&\dF(P_{j_1}\cdot P_{j_1+1}\cdot\ldots\cdot P_{j-1},(q_{\ell_1})), \\
          &\dF(P_{j},(q_{\ell_1},\dots,q_{\ell})), \\
          &\dF(P_{j+1}\cdot\ldots\cdot P_m,(q_{\ell'},\dots,q_k))\}
    \end{aligned}
    \right\}\qquad
\end{multline*}
takes $O(\log n)$ time.
There are $O(\log n)$ such computations, 
so all of them together take $O(\log^2 n)$ time.
Hence, the total running time of the algorithm is $O(\log^2 n + t_k(n)\log n)$.

Using the results from Section~\ref{sec:curves} (see Figure~\ref{tbl:results}) for the black box implementation, we can therefore conclude with the following summary; for the construction time and space, preprocessing for individual heavy paths dominates the costs. 

\begin{theorem} \label{thm:reduction}
  For a geometric tree on $n$ vertices,
  one can construct discrete Fr\'echet distance oracles with query times $O(\log^3 n)$, $O(\log^3 n)$, $O(\log^4 n)$, and $O^*(\sqrt{n})$ for query sizes one, two, three, and four, respectively.
  The structures require $O(n\log n)$ space for query sizes up to three and $O^*(n)$ for query size four.
\end{theorem}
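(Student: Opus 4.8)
The plan is to combine the reduction developed in this section with the curve oracles of Section~\ref{sec:curves}. First I would fix the root of~$T$ arbitrarily, compute its heavy-path decomposition following~\cite{heavy}, and, for every path in the decomposition, build the discrete Fr\'echet distance oracle for polygonal curves (supporting vertex-to-vertex subcurve queries of size at most~$k$) promised by Section~\ref{sec:curves}; this is precisely the ``black box'' of Section~\ref{sec:tree-ds}. Because the heavy paths partition the vertex set of~$T$, if the curve oracle for a path on $n_i$ vertices uses $s(n_i)$ space, the total space is $\sum_i s(n_i)$; for near-linear $s(\cdot)$ this sums to $O(n\log n)$ for query sizes up to three and to $O^*(n)$ for query size four, and similarly the construction time, dominated by the per-path preprocessing, stays near-linear. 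I would also store the auxiliary structure of~\cite{heavy} that, given $u$ and $v$, returns the $O(\log n)$ subpaths whose concatenation is $\Pi_{uv}$.

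Next, to answer a query $(Q,u,v)$ with $|Q|=k$, I would run the dynamic-programming procedure of Section~\ref{sec:tree-alg} verbatim: decompose $\Pi_{uv}$ into $P_1\cdot P_2\cdot\ldots\cdot P_m$ with $m=O(\log n)$, fill in the tables underlying Eqs.~\eqref{Tree base case} and~\eqref{Tree case 2} bottom up, invoking the curve-oracle black box $O(\log n)$ times for each of the $O(k^2)=O(1)$ choices of subcurve $Q[\ell_1,\ell_2]$, and combining already-computed intermediate values in $O(1)$ time each. As analyzed there, the query cost is $O(\log^2 n + t_k(n)\log n)$, where $t_k(n)$ is the query time of the curve oracle and we use the monotonicity assumption $t_{k-1}(n)\le t_k(n)$.

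It then remains only to substitute the bounds from Figure~\ref{tbl:results} (the ``for subcurve of $P$'' column) for $t_k(n)$: $t_1(n)=t_2(n)=O(\log^2 n)$, $t_3(n)=O(\log^3 n)$, and $t_4(n)=O^*(\sqrt n)$. This yields query times $O(\log^3 n)$, $O(\log^3 n)$, $O(\log^4 n)$, and $O^*(\sqrt n)$ (the extra logarithmic factor being absorbed into the $O^*$ notation) for query sizes one, two, three, and four, respectively, as claimed, with the stated space bounds.

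I do not expect a serious obstacle, since the heavy technical work lives in Section~\ref{sec:curves}; the only points needing care are (a) checking that the Section~\ref{sec:curves} constructions indeed support arbitrary vertex-to-vertex subcurve queries and satisfy $t_{k-1}(n)\le t_k(n)$, so that the black-box interface of Section~\ref{sec:tree-ds} is met, and (b) confirming that summing a near-linear space or time bound over the heavy-path partition — whose parts have total size $O(n)$ — again gives a near-linear bound, which follows from $\sum_i n_i\log n_i \le \log n\sum_i n_i = O(n\log n)$ and the analogous estimate for the $O^*$ bound used when $k=4$.
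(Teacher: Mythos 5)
Your proposal matches the paper's argument exactly: heavy-path decomposition, per-path curve oracles from Section~\ref{sec:curves} serving as the black box, the dynamic program of Section~\ref{sec:tree-alg} yielding $O(\log^2 n + t_k(n)\log n)$ query time, and direct substitution of the subcurve-query bounds from Figure~\ref{tbl:results}. The space and construction-time bookkeeping over the heavy paths is likewise the same, so nothing is missing.
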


\section{Distance oracles for local graphs}
\label{sec:graphs}

A \emph{geometric graph} $G$ is a graph defined over a (finite) set $P$ of points in $\reals^d$ as vertices, and in which the weight of an edge $e=(p,q)$, $p,q \in P$, is the Euclidean distance $\|p-q\|$ between $p$ and $q$.
Let $t \ge 1$ be a real parameter. We say that $G$ is \emph{$t$-local} if the following condition holds:
For any ball $B$ and for any two points $p,q \in P \cap B$, there exists a path in $G$ between $p$ and $q$ that does not leave $tB$, where $tB$ is the ball obtained from $B$ by scaling it by a factor of $t$ around its center, that is, $p$ and $q$ are connected in the subgraph of $G$ induced by the set $P \cap tB$. 
We say that a graph is \emph{local} if it is $t$-local for some constant $t \ge 1$.

We first examine the connection between geometric spanners and local graphs. Recall that $G=G(P,E)$ is a \emph{$t$-spanner}, if for any any two points $p,q \in P$, there exists a path in $G$ between $p$ and $q$ of length at most $t\cdot \|p-q\|$ where the length of a path is the sum of the lengths of its edges. A \emph{spanner} is a graph that is a $t$-spanner for some constant $t \ge 1$.   

\begin{observation}
\label{obs:spanner_implies_local}
  Any geometric $t$-spanner is $2t$-local.\footnote{%
  We believe that the constant 2 can be improved with some additional effort.}
\end{observation}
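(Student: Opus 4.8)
The plan is to take an arbitrary ball $B$ with center $c$ and radius $\rho$, and two points $p,q \in P \cap B$, and exhibit a $(p,q)$-path in $G$ that stays inside $2tB$, i.e., the ball of radius $2t\rho$ around $c$. Since $G$ is a $t$-spanner, there is a $(p,q)$-path $\pi = (p = x_0, x_1, \dots, x_\ell = q)$ in $G$ whose total length is at most $t\|p-q\|$. The idea is that this very path already works: I will bound the distance from $c$ to any vertex $x_i$ on $\pi$.

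The key estimate is the following. For any $i$, the prefix $x_0, \dots, x_i$ of $\pi$ is a subpath, so $\|x_i - p\| \le \sum_{j<i}\|x_{j+1}-x_j\| \le (\text{length of }\pi) \le t\|p-q\|$. Now $\|p-q\| \le \|p-c\| + \|c-q\| \le \rho + \rho = 2\rho$, since both $p$ and $q$ lie in $B$. Hence $\|x_i - c\| \le \|x_i - p\| + \|p - c\| \le t\cdot 2\rho + \rho = (2t+1)\rho$. This shows $\pi$ stays inside $(2t+1)B$, which gives $(2t+1)$-locality but not quite $2t$-locality as claimed; to get the stated constant $2$ one should instead route through $p$ or $q$ more carefully. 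A cleaner bound: for each $i$, $x_i$ lies on a $(p,q)$-path of length $\le t\|p-q\| \le 2t\rho$, so $\min(\|x_i-p\|,\|x_i-q\|) \le \tfrac12\cdot 2t\rho = t\rho$ (the shorter of the two subpaths from $x_i$ to an endpoint has length at most half the total). Therefore $\|x_i - c\| \le t\rho + \rho = (t+1)\rho \le 2t\rho$ (using $t \ge 1$), so $\pi$ lies in $2tB$, as desired. Since $\pi$ is a path in $G$ connecting $p$ and $q$ and contained in $2tB$, it is in particular a path in the subgraph of $G$ induced by $P \cap 2tB$, which is exactly the $2t$-locality condition.

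I would carry out the steps in this order: (1) fix $B$, $c$, $\rho$, $p$, $q$ and note $\|p-q\| \le 2\rho$; (2) invoke the $t$-spanner property to get a path $\pi$ of length $\le t\|p-q\| \le 2t\rho$; (3) observe that any vertex $x_i$ of $\pi$ splits $\pi$ into two subpaths, the shorter of which has length $\le t\rho$, hence $x_i$ is within distance $t\rho$ of an endpoint of $\pi$; (4) conclude $\|x_i - c\| \le t\rho + \rho \le 2t\rho$ using $t\ge 1$; (5) deduce that $\pi$ is contained in $2tB$ and that $p,q$ are connected in the induced subgraph on $P \cap 2tB$.

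I do not expect any serious obstacle here — the argument is a short triangle-inequality computation. The only subtlety worth care is the "shorter-half" step (3): it uses that the length of $\pi$ dominates both $\|x_i - p\|$ and $\|x_i - q\|$ (since each is bounded by the length of the corresponding subpath), so the smaller one is at most half the total length; this is what buys the factor $2t$ rather than $2t+1$. As the footnote in the statement suggests, pushing the constant below $2t$ would require a genuinely different idea (e.g.\ exploiting that the spanner path can be chosen to hug the segment $pq$), and I would not attempt that in the basic proof.
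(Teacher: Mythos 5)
Your proof is correct. It takes a genuinely different (and somewhat cleaner) route from the paper's. The paper first fixes the diametrical disk $D$ of the segment $pq$, observes that every point $x$ of the spanner path satisfies $\|p-x\|+\|q-x\|\le t\|p-q\|$ (an ellipse bound), concludes the path lies in $tD$, and then separately argues $tD\subseteq 2tD'$ for an arbitrary disk $D'$ containing $p$ and $q$. You instead work directly with the arbitrary disk $B$ of radius $\rho$: your key step is the ``shorter-half'' observation that any vertex of the spanner path is within distance $\tfrac12\cdot t\|p-q\|\le t\rho$ of one of the two endpoints, which immediately gives $\|x-c\|\le (t+1)\rho$. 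This is a weaker per-vertex bound than the paper's ellipse bound, but it avoids the two-step disk argument entirely and is arguably more elementary. Notably, your chain of inequalities gives $(t+1)$-locality, not just $2t$-locality, which is strictly better for $t>1$; the paper's own computation implicitly gives the same $(t+1)$ bound but the authors rounded up to $2t$ (and their footnote suggests they did not notice the improvement comes for free). You could state the sharper bound explicitly.
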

\begin{proof}
Suppose $G=G(P,E)$ is a $t$-spanner.
We will show that $G$ is $2t$-local.

Consider an arbitrary pair of points $p,q \in P$ and put $d \coloneqq \|p-q\|$.  By assumption, there is a path $P(p,q)$ of length at most $td$ in $G$. Let $D$ be the disk with segment $pq$ as the diameter.  Let $x$ be a point of $P(p,q)$. By the triangle inequality, we have 
\begin{equation}  \|p-x\| + \|q-x\| \le |P(p,x)| + |P(q,x)| = |P(p,q)| \leq td\, , \label{ellipse} \end{equation}
where $P(p,x)$ and $P(q,x)$ are the appropriate subpaths of $P(p,q)$.
The locus of points $x$ satisfying Eq.~\eqref{ellipse} is an elliptical region $E$ with foci $p$ and $q$ and major axis $td$.  In particular, it fits into $tD$.  This proves the observation for disk $D$, as clearly $tD \subseteq 2tD$. 

Now, let $D'$ be any other disk containing both $p$ and $q$. We show that $tD \subseteq 2tD'$ and therefore $P(p,q) \subseteq 2tD'$. Let $o'$ and $d'$ be the center and diameter of $D'$, respectively. Then, $d' \ge d$ and $\|o'-p\|,\|o'-q\| \le d'/2$ (and therefore also $\|o'-o\| \le d'/2$, where $o$ is $D$'s center). Let $a$ be any point on the boundary of $tD$, then, by triangle inequality, $\|o'-a\| \le \|o'-o\| + \|o-a\| \le d'/2 + td/2 = d'(1/2 + t/2) \le td'$ (since $t \ge 1$), and therefore $a \in 2tD'$, completing the proof.
\end{proof}

The opposite implication does not hold, as formalized in Theorem~\ref{thm:local-not-spanner} below. %

We conclude that the locality property is weaker than the spanning property, i.e., the class of local graphs strictly contains the class of spanner graphs. 
\begin{theorem}
    \label{thm:local-not-spanner}
    There exists a constant $t > 1$, such that for any $t'\ge 1$, one can construct a graph that is $t$-local, but not a $t'$-spanner.
\end{theorem}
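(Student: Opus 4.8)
The plan is to construct, for a fixed small constant $t$ (e.g.\ $t=3$ should work comfortably, though we will not optimize), an explicit family of geometric graphs $G_N$ that are $t$-local but whose spanning ratio grows without bound as $N\to\infty$; since a single graph with spanning ratio $> t'$ already fails to be a $t'$-spanner, it suffices to exhibit one such $G_N$ for each target $t'$. The natural candidate is a \emph{long, thin ``staircase'' or zig-zag path}: take two points $p$ and $q$ that are close together in the plane, but join them only via a polygonal path that travels far away and comes back, so that the unique $p$--$q$ route in the graph has Euclidean length $\Theta(N)$ while $\|p-q\|=O(1)$. The spanning ratio is then $\Omega(N)$, exceeding any prescribed $t'$ once $N$ is large enough.

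The core of the argument is to verify $t$-locality, and this is where the geometric design must be careful. First I would make the construction ``self-similar at all scales'': rather than a single detour, build the graph so that between \emph{every} pair of nearby vertices there is a connecting path that stays within a bounded blow-up of the smallest enclosing disk. A clean way to do this is to place the vertices on a curve that is locally a spanner but globally not — for instance, arrange the points so that within any disk $D$ the induced subgraph already contains a short route between any two of its points that never wanders outside $tD$, even though the \emph{global} $p$--$q$ connection is forced to be long because $p$ and $q$ happen to be spatially close but combinatorially far apart on the curve. Concretely, one can take a path whose vertices spiral inward (or fold back like an accordion) so that consecutive ``layers'' are geometrically close; a disk $D$ that contains points from several layers must, because the layers are nested, have radius comparable to the spatial extent of those layers, and one checks that the sub-path of $G_N$ joining any two such points is confined to $O(1)\cdot D$.

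For the locality check itself I would argue by cases on the radius $\rho$ of the given disk $D$. If $\rho$ is smaller than the spacing between non-adjacent portions of the path, then $P\cap D$ lies on a single contiguous sub-path, and that sub-path — being a bounded-detour polyline — stays in $tD$. If $\rho$ is large, then $tD$ already contains a large chunk of the graph, and one routes through that chunk; the worst case is the intermediate regime, where $D$ straddles a ``fold,'' and here the nesting property of the layers is exactly what bounds the detour. I would formalize the fold geometry so that the constant coming out of all three cases is a single absolute $t$ independent of $N$.

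The main obstacle is precisely the tension in the construction: to make the spanning ratio unbounded, $p$ and $q$ must be spatially close while the graph-path between them is long, but $t$-locality insists that \emph{every} disk containing two close points also contains (a scaled copy of) a short route between them. Resolving this requires that the ``short route'' demanded by locality and the ``long route'' that inflates the spanning ratio be the \emph{same} path — which is possible only because $t$-locality measures the route's \emph{spatial confinement} to $tD$, not its \emph{length}. So the key insight to get right is that a path can be geometrically confined to a small region (good for locality) while still being long (bad for spanning), and the folded/spiral layout is designed to realize exactly that. Once the layout is fixed, the locality verification is a routine though slightly fiddly case analysis, and the spanning-ratio lower bound is immediate from $\|p-q\|=O(1)$ versus path length $\Theta(N)$.
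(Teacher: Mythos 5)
Your final ``key insight'' is exactly right: a path can be \emph{spatially confined} to a small region while having \emph{unbounded length}, and that is the property one needs. The Koch curve used in the paper's proof realizes this perfectly: the iterate $F_n$ has total length $(4/3)^n\to\infty$ while keeping its two endpoints at Euclidean distance~$1$ (so the spanning ratio blows up), and a lemma of Farhana and Katz guarantees that for \emph{any} two vertices $u,v$ of $F_n$ the subpath $F_n[u,v]$ fits in a rectangle of diagonal $O(\|u-v\|)$, which immediately gives $t$-locality with a universal constant~$t$.

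The concrete construction you propose, however, does not realize that insight, and this is a genuine gap. In a spiral or accordion with roughly uniform layer spacing $\delta$, you can pick two vertices $p,q$ on adjacent layers at the same angular or transverse position, so $\|p-q\|\approx\delta$ and a disk $D$ of radius $\delta/2$ contains them both. The graph is a single path, so the \emph{only} $p$--$q$ route is the one that winds all the way around the layer; that subpath has spatial extent $\Theta(1)$, not $\Theta(\delta)$, forcing $t=\Omega(1/\delta)$, which is unbounded as the construction is refined. Your claim that ``a disk $D$ that contains points from several layers must \ldots\ have radius comparable to the spatial extent of those layers'' is false in this regime: a tiny disk can straddle a fold without coming close to containing it. Trying to rescue the spiral by making the layer spacing proportional to the layer radius (so that adjacent-layer pairs are well-separated at scale) fixes locality but then the total path length is a convergent geometric series, and the graph becomes a spanner after all — the very phenomenon you are trying to avoid. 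The resolution in the paper is not to bring combinatorially distant vertices spatially close at all; instead, the two ``far'' endpoints of the Koch curve stay at fixed distance $1$, and the spanning ratio diverges purely because the \emph{length} of the confined path diverges. That is what the fractal self-similarity buys you and what a folded or spiral polyline cannot.
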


\begin{proof}
Consider the following sequence of path graphs. Let $F_0$ be the path with vertices at $(0,0)$ and $(1,0)$ and an edge between them. For $n \ge 1$, the path $F_n$ is obtained from the path $F_{n-1}$ by replacing each edge $(u,v)$ of $F_{n-1}$, where $u$ precedes $v$, with three new vertices and four new edges as follows. First, add vertices $u'$ and $v'$, so that $|uu'| = |u'v'| = |v'v| = \frac{1}{3}|uv|$, and add the edges $(u,u')$ and $(v',v)$. Next, add vertex $w$ to the left of the segment $u'v'$, so that $\Delta u'wv'$ is an equilateral triangle, and add the edges $(u'w)$ and $(w,v')$; see Figure~\ref{fig:fractal}.   

\begin{figure}[ht]
    \centering
    \includegraphics[width=0.5\linewidth]{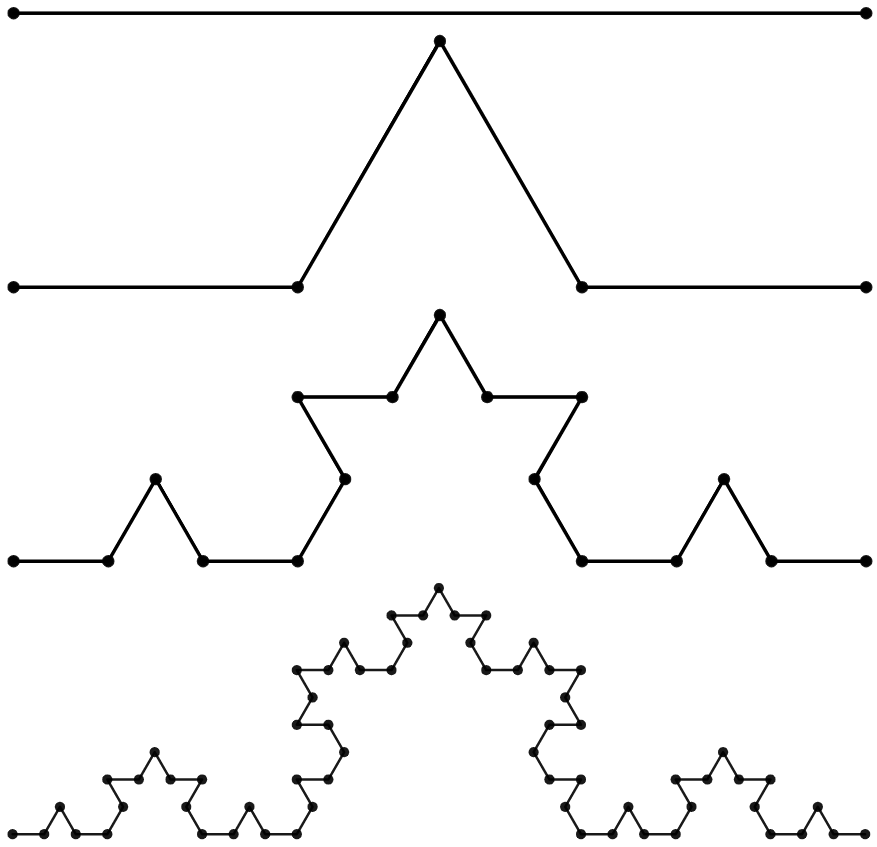}
    \caption{The path graphs $F_0$, $F_1$, $F_2$ and $F_3$.}
    \label{fig:fractal}
\end{figure}

By repeating this construction indefinitely, we get the fractal known as the \emph{Koch curve}~\cite{Koch1904}, which is one of the three curves forming the \emph{Koch snowflake}. It is well known that the length of the Koch curve is unbounded, that is, for every $l > 0$, there exists an integer $n$, such that the length of $F_n$, i.e., the sum of its edge lengths, is greater than $l$. Thus, for any $t' \ge 1$, there exists an integer $n$, such that $F_n$ is not a $t'$-spanner.

To complete the proof, we show that there exists a constant $t > 1$, such that for any $n \ge 0$, $F_n$ is $t$-local. 
Indeed, Farhana and Katz~\cite{FarhanaK23} showed that for any $n \ge 0$ and any two vertices $u$ and $v$ of $F_n$, there exists a rectangle $R$ such that (i) $R$ contains the subpath $F_n[u,v]$ of $F_n$ from $u$ to $v$, and (ii) the length of $R$'s diagonal is at most $c_0 \cdot \|u-v\|$, where $c_0$ is an absolute constant.  Now, consider a disk $D=\disk_r(o)$ containing $u$ and $v$. Then $r \ge \|u-v\|/2$.
Moreover, since for any vertex $w$ of $F_n[u,v]$, $\|u-w\| \le c_0 \cdot \|u-v\|$, we obtain
\[
  \|o-w\| \le \|o-u\| + \|u-w\| \le r + c_0 \cdot \|u-v\| \le (1 + 2c_0)r\, ,
\]
for any such vertex $w$. Or, in other words, $F_n[u,v] \subset (1+2c_0)D$, so $F_n$ is $t$-local for $t = (1+2c_0)$.
\end{proof}

Hereafter, we assume that $d=2$ and  that the points in $P$ are in \emph{general position}, i.e, no line passes through three or more points of $P$ and no circle passes through four or more of them.

\subparagraph*{1-local graphs.}
We begin with the case $t=1$, which is especially interesting.
Let $DT(P)$ denote the Delaunay triangulation of $P$. We think of $DT(P)$ as a graph over $P$, and prove below
that any 1-local graph over $P$ contains $DT(P)$ as a subgraph, and that $DT(P)$ itself is 1-local. 

\begin{observation}
\begin{enumerate}[(i)]
    \item
        $DT(P)$ is 1-local.
    \item
        Any 1-local graph $G$ over $P$ contains $DT(P)$.
\end{enumerate}
\end{observation}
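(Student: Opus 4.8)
The plan is to prove the two parts using the classical "empty circle" characterization of Delaunay edges. Recall that, under the general-position assumption, a pair $p,q \in P$ is an edge of $DT(P)$ if and only if there exists a disk having $p$ and $q$ on its boundary whose interior contains no point of $P$. The two parts will both be deduced from this fact together with a short argument about what disks can "see" in a 1-local graph.

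For part (ii), suppose $G$ is 1-local and let $pq$ be a Delaunay edge, certified by a disk $D$ with $p,q$ on $\partial D$ and $\mathrm{int}(D) \cap P = \emptyset$. By 1-locality applied to $D$, there is a path in $G$ between $p$ and $q$ that stays inside $1 \cdot D = D$. But the only points of $P$ inside $D$ are $p$ and $q$ themselves (they lie on the boundary, and the interior is empty), so the path can only use the vertices $p$ and $q$; hence it must be the single edge $pq$, which therefore belongs to $G$. This shows $DT(P) \subseteq G$. For part (i), note that $DT(P)$ is itself a graph over $P$, so it suffices to verify that it is 1-local directly: given any disk $B$ and any two points $p,q \in P \cap B$, we must connect them by a path inside $B$ using only vertices of $P \cap B$. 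The natural approach is to walk along $DT(P)$ restricted to $P \cap B$; I would argue that the Delaunay triangulation of the subset $P \cap B$ is connected and that each of its edges is also a Delaunay edge of $P$ — indeed, an empty-circle certificate for an edge of $DT(P \cap B)$ can be shrunk (slide/shrink the witness disk while keeping $p,q$ on its boundary) until its interior avoids all of $P$, not just $P \cap B$, because any point of $P$ outside $B$ is "far" in the appropriate sense. Thus $DT(P \cap B) \subseteq DT(P)$ and it is connected, giving a $p$–$q$ path in $DT(P)$ all of whose vertices lie in $B$.

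The main obstacle is the last claim in part (i): that restricting to $P \cap B$ only removes edges from the Delaunay triangulation and never disconnects the surviving structure, i.e. $DT(P\cap B)$ is a subgraph of $DT(P)$ whose vertex set is exactly $P \cap B$. The subgraph inclusion is the delicate direction. One clean way is this: take an edge $e = pq$ of $DT(P \cap B)$ with empty-circle witness $C$ (a disk with $p,q$ on the boundary, interior disjoint from $P \cap B$). If $\mathrm{int}(C)$ already avoids all of $P$, we are done. Otherwise some points of $P \setminus B$ lie in $\mathrm{int}(C)$; continuously deform $C$ through the one-parameter family of disks through $p$ and $q$ (shrinking toward the smaller side) — this family's union is contained in $C$, so it never picks up a new point of $P \cap B$, while it eventually expels any point of $P \setminus B$ lying in the interior, since such a point is separated from the chord $pq$ by $\partial B$ and the disks of the family contract to the segment $pq \subseteq B$. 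The first moment the interior becomes $P$-empty yields a Delaunay-of-$P$ certificate for $pq$. I would also double-check the degenerate case where one of $p,q$ lies on $\partial B$, which is handled the same way, and note that general position prevents ties.

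Finally I would assemble the pieces: part (i) gives that $DT(P)$ is a legitimate 1-local graph over $P$; part (ii) gives that every 1-local graph over $P$ contains it; together these say $DT(P)$ is the unique minimal 1-local graph on $P$, which is exactly the statement preceding the Observation in the text.
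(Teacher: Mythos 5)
Part (ii) is correct and matches the paper's argument exactly: take the empty-circle witness $D$ for a Delaunay edge $pq$, apply $1$-locality to $D$, and observe that since $P \cap D = \{p,q\}$ the guaranteed path can only be the single edge $pq$.

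Part (i) has a genuine gap. The paper states it without proof, citing it as a known property of the Delaunay triangulation; your attempted proof hinges on the claim $DT(P \cap B) \subseteq DT(P)$, and that claim is false. For instance, take $p_1 = (0,1)$, $p_2 = (-1,0)$, $p_3 = (1,0)$, $p_4 = (0,-0.9)$, and let $B$ be the disk of radius $1.05$ centered at $(0,0.3)$. Then $P \cap B = \{p_1,p_2,p_3\}$, so $p_2p_3 \in DT(P \cap B)$; but $p_2p_3 \notin DT(P)$, since any circle through $p_2$ and $p_3$ has center $(0,y)$ and contains $p_1$ unless $y<0$, while it contains $p_4$ unless $y > 0.19/1.8$, so no empty witness exists. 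The deformation argument you give to justify the inclusion is also flawed: sliding the center of a disk through $p,q$ toward the midpoint of the chord $pq$ shrinks the cap on one side of the chord but \emph{grows} the cap on the other side, so the one-parameter family is not contained in the original witness $C$ and can sweep over new points of $P\cap B$; moreover it limits to the diametral disk of $p,q$, not to the segment $pq$, and that diametral disk need not lie inside $B$. The fact underlying part (i) --- that for any disk $B$ and $p,q\in P\cap B$ the triangulation $DT(P)$ contains a path from $p$ to $q$ through vertices of $P\cap B$ --- is true, but standard proofs argue directly in $DT(P)$ (e.g.\ shrink a disk inside $B$ through $p$ and $q$ until either it is empty, giving a Delaunay edge, or it contains a vertex $r\in P\cap B$ on which to recurse), rather than via the false inclusion $DT(P\cap B)\subseteq DT(P)$.
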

\begin{proof}
    {\bf (i)} Let $D$ be a disk such that $|P \cap D| \ge 2$ and let $p, q \in P \cap D$. We need to show that there exists a path in $DT(P)$ between $p$ and $q$ that does not leave $D$, but this is a known property of $DT(P)$. 
    {\bf (ii)} Let $e=pq$ be an edge of $DT(P)$. Then, there exists a disk $D$ such that $P \cap D = \{p,q\}$.
    Since $G$ is 1-local, there exists a path in $G$ between $p$ and $q$ that does not leave $D$, so $e$ is an edge of $G$. 
    We thus conclude that $G$ contains $DT(P)$.
\end{proof}

We now return to our main topic, namely, discrete Fr{\' e}chet distance oracles, and study the following problem.
Let $G$ be a $t$-local graph defined over a set $P$ of $n$ points in the plane.
For any two vertices $u$ and $v$ of $G$, let $\P_G(u,v)$ denote the set of all paths between $u$ and $v$ in $G$.
Then, we define the \emph{discrete Fr{\' e}chet distance between a polygonal curve $Q$ and $G$} (with respect to $u$ and $v$) to be $\min_{\Pi \in \P_G(u,v)} \dF(Q,\Pi)$, denoted by $\dF(Q,\P_G(u,v))$.   
We wish to preprocess $G$, so that given a query curve $Q$ and two vertices $u$ and $v$ of $G$, one can compute $\dF(Q,\P_G(u,v))$ efficiently.

We begin with the special case where the queries are line segments connecting two vertices of $G$.

\subparagraph*{\texorpdfstring{$G$}{G} is 1-local and \texorpdfstring{$Q=uv$}{Q=uv}, where \texorpdfstring{$u,v$}{u,v} are vertices of \texorpdfstring{$G$}{G}.}
Let $r$ be the smallest radius for which there exist two vertices $u', v'$ of $G$, such that $u' \in \disk_r(u)$, $v' \in \disk_r(v)$, and $(u',v')$ is an edge of $G$, where $\disk_r(w)$ denotes the disk of radius $r$ centered at $w$. (Notice that if $(u,v)$ is an edge of $G$, then $r=0$.) Our solution is based on the following claim.

\begin{claim}
\label{cl:r}
	$\dF(uv,\P_G(u,v))=r$.
\end{claim}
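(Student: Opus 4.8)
The plan is to prove the two inequalities $\dF(uv,\P_G(u,v))\ge r$ and $\dF(uv,\P_G(u,v))\le r$ separately, after first recording a convenient description of the discrete Fr\'echet distance between the two-vertex curve $(u,v)$ and a single $u$--$v$ path of $G$. Let $\Pi=(w_1,\dots,w_k)$ be a path in $G$ with $w_1=u$ and $w_k=v$. Since $Q=(u,v)$ has only two vertices, a monotone walk of $(u,v)$ and $\Pi$ is determined by the index at which the $Q$-pointer leaves $q_1=u$ for $q_2=v$, and realizing that transition as a diagonal step from $(u,w_a)$ to $(v,w_{a+1})$ is never worse than matching a single $w_a$ to both $u$ and $v$; hence
\[
  \dF\bigl((u,v),\Pi\bigr)=\min_{1\le a\le k-1}\ \max\Bigl(\ \max_{i\le a}\|u-w_i\|\ ,\ \max_{i\ge a+1}\|v-w_i\|\ \Bigr).
\]
I would establish this formula first, as it is the technical backbone of both directions.

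For the lower bound, fix any $\Pi=(w_1,\dots,w_k)\in\P_G(u,v)$ and set $\rho:=\dF((u,v),\Pi)$. The displayed formula produces an index $a$ with $\|u-w_i\|\le\rho$ for all $i\le a$ and $\|v-w_i\|\le\rho$ for all $i\ge a+1$; in particular $w_a\in\disk_\rho(u)$, $w_{a+1}\in\disk_\rho(v)$, and $(w_a,w_{a+1})$ is an edge of $\Pi$ and therefore of $G$. By the minimality in the definition of $r$, this forces $\rho\ge r$, and since $\Pi$ is arbitrary, $\dF(uv,\P_G(u,v))\ge r$.

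For the upper bound I would invoke $1$-locality to construct a witness. Let $(u',v')$ be an edge of $G$ attaining the minimum defining $r$, so $u'\in\disk_r(u)$ and $v'\in\disk_r(v)$. Since $u,u'\in P\cap\disk_r(u)$ and $G$ is $1$-local (with $1\cdot\disk_r(u)=\disk_r(u)$), there is a path of $G$ from $u$ to $u'$ contained in $\disk_r(u)$; symmetrically there is a path of $G$ from $v'$ to $v$ contained in $\disk_r(v)$. Concatenating these two paths with the edge $(u',v')$ yields a $u$--$v$ walk in $G$ whose first portion lies in $\disk_r(u)$ and whose second portion lies in $\disk_r(v)$. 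Passing to the simple path $\Pi$ obtained by deleting repeated vertices only removes vertices, so it preserves both containments and the presence of a single bridging edge between the $\disk_r(u)$-portion and the $\disk_r(v)$-portion; thus $\Pi\in\P_G(u,v)$, and matching $u$ to the first portion, stepping diagonally across the bridging edge, and matching $v$ to the second portion gives a walk of cost at most $r$. Hence $\dF(uv,\P_G(u,v))\le\dF((u,v),\Pi)\le r$, and combining the two bounds proves the claim.

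The step I expect to require the most care is the last one: verifying that extracting a simple path from the concatenated walk still leaves a clean split into a $\disk_r(u)$-prefix and a $\disk_r(v)$-suffix joined by a single edge, so that the two-vertex distance formula applies at the junction. (If $\P_G(u,v)$ is read as permitting walks, this subtlety disappears and the concatenated walk is already a legal witness.) Proving the distance formula and the lower bound is routine by comparison.
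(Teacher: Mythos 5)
Your proof is correct and follows essentially the same route as the paper's: the lower bound comes from observing that an optimal split index of a walk of $(u,v)$ against any $\Pi\in\P_G(u,v)$ produces an edge $(w_\ell,w_{\ell+1})$ with $w_\ell\in\disk_{d^*}(u)$ and $w_{\ell+1}\in\disk_{d^*}(v)$, forcing $d^*\ge r$; the upper bound comes from $1$-locality, which yields paths from $u$ to $u'$ inside $\disk_r(u)$ and from $v'$ to $v$ inside $\disk_r(v)$, concatenated across the witnessing edge $(u',v')$. The paper phrases the upper bound as a proof by contradiction and does not dwell on the simple-path extraction you flag at the end (which, as you note, is a non-issue if $\P_G(u,v)$ admits walks, and is in any case easily handled since deleting repeated vertices only shrinks the two matched sets), but these are presentational differences rather than a different argument.
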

\begin{proof}
    Let $\Pi = (u=w_1,\ldots,w_k=v) \in \P_G(u,v)$ be a path between $u$ and $v$ such that $\dF(uv,\P_G(u,v)) = \dF(uv,\Pi)$, and set $d^* = \dF(uv,\Pi)$. Let $\ell$, $1 \le \ell < k$, be a split index, i.e.,  if we associate $(w_1,\ldots,w_\ell)$ with $u$ and $(w_{\ell+1},\ldots,w_k)$ with $v$, then
    \[
      \max\bigl\{\max_{1 \le i \le \ell}\|u-w_i\|,\max_{\ell+1 \le i \le k}\|v-w_i\|\bigr\}=d^*\, .
    \]
    We first observe that $d^*$ is at least $r$. Indeed $w_\ell \in \disk_{d^*}(u)$, $w_{l+1} \in \disk_{d^*}(v)$, and $(w_l,w_{l+1})$ is an edge of $G$.
	
    To complete the proof we show that $r \ge d^*$. 
    For a contradiction, assume that $r < d^*$. Let $u'$ and $v'$ be two vertices such that $u' \in \disk_r(u)$, $v' \in \disk_r(v)$, and $(u',v')$ is an edge of $G$. Then, since $G$ is 1-local, there exists a path from $u$ to $u'$ contained in the disk around $u$ of radius $\|u-u'\| \le r$, and there exists a path from $v$ to $v'$ contained in the disk around $v$ of radius $\|v-v'\| \le r$. Therefore, there exists a path from $u$ to $v$ whose discrete Fr{\' e}chet distance from $uv$ is at most $r$ --- a contradiction.      
\end{proof}

\subparagraph*{The data structure.}
\label{sec:graph-ds}
In the preprocessing stage, we first construct a data structure $\Tedge$ of near-linear size for disk range searching in $P$ (see also the data structure description and references in 
Section~\ref{sec:oracle-4}). 
This data structure allows us to compute the set $P \cap D$, for a query disk $D$, as the union of pre-stored pairwise-disjoint canonical subsets, in $O^*(\sqrt{n})$ time. 
For a subset $P'$ of $P$, let $N(P')$ be the neighbor set of $P'$ in $G=G(P,E)$; that is, $N(P') = \{q \in P \mid \exists p \in P' \text{ with } (p,q) \in E\}$. We now augment $\Tedge$ as follows. For each canonical subset $P'$, we compute $V(P' \cup N(P'))$, the Voronoi diagram of $P' \cup N(P')$, and associate it with $P'$.
The final size of $\Tedge$ is therefore near-linear in $n+m$, where $m=|E|$. 

We also construct a second
data structure $\Tannu$ of near-linear size for annulus range searching in $P$. This data structure allows us to compute the set $P \cap A$, for a query annulus $A$, in $O^*(\sqrt{n} + k)$ time, where $k=|P \cap A|$.

\subparagraph*{The decision problem.} We describe how to determine whether $\dF(uv,\P_G(u,v)) \le d$, where $u$ and $v$ are any two vertices of $G$ and $d > 0$ is a given value, in $O^*(\sqrt{n})$ time. By arguments similar to those given above, $\dF(uv,\P_G(u,v)) \le d$ if and only if either $P \cap \disk_d(u) \cap \disk_d(v) \ne \emptyset$, or there exist points $u' \in \disk_d(u)$ and $v' \in \disk_d(v)$ such that $(u',v')$ is an edge of $G$. We thus use $\Tedge$ to compute a representation of $P \cap \disk_d(u)$ as the union of pairwise-disjoint canonical subsets, and for each of these subsets $P'$, we search in its associated Voronoi diagram $V(P' \cup N(P'))$ for the point that is closest to $v$. Finally, if at least one of the closest points that were found is within distance $d$ of $v$, then we return \textsc{yes}, and otherwise we return \textsc{no}.

\subparagraph*{Optimization.} We now describe how to compute $\dF(uv,\P_G(u,v))$, which is one of the $O(n)$ distances between $u$ or $v$ and a point in $P$. 
Let $S$ be a random sample of size $\sqrt{n}$ of these $O(n)$ distances. 
We find a pair of consecutive distances $d_1 < d_2$ in $S \cup \{0,\infty\}$, by a binary search using the decision procedure.
The expected number of distances between $u$ or $v$ and a point in $P$ that lie in the range $(d_1,d_2]$ is $O(\sqrt{n})$, and we can find them in $O^*(\sqrt{n})$ expected time by querying the second data structure $\Tannu$ with the annuli $A(u,d_1,d_2)$ and $A(v,d_1,d_2)$. Once we have these distances, we can find the smallest among them, $d^*$, that is still greater or equal than $\dF(uv,\P_G(u,v))$, by another binary search. We conclude that $\dF(uv,\P_G(u,v))=d^*$.

\subparagraph*{\texorpdfstring{$G$}{G} is 1-local and \texorpdfstring{$Q=ab$}{Q=ab}, where \texorpdfstring{$a,b$}{a,b} are arbitrary points in the plane.}
We remark that the query segment $Q$ does not have to be the segment between the specified vertices ($u$ and $v$) of $G$. The only difference is in the algorithm for the decision problem, where we need to take into account that $a$ must be `matched' to $u$ and $b$ must be `matched' to $v$. In particular, if $d < \max\{\|a-u\|,\|b-v\|\}$, then we immediately return \textsc{no}. Otherwise, we proceed as above, except that we consider the disk $\disk_d(a)$ (rather than $\disk_d(u)$) and search in the Voronoi diagrams for the point closest to $b$ (rather than to $v$).

The following theorem summarizes our result for $t=1$.
\begin{theorem}
    \label{thm:graph:1-local}
    Let $G=(P,E)$ be a 1-local graph. Then, we can compute $\dF(ab,\P_G(u,v))$, for any pair of vertices $u,v \in P$ and any pair of points $a,b \in \reals^2$, in $O^*(\sqrt{n})$ expected time, after a preprocessing stage in which we construct data structures of size $O^*(n+m)$. In particular, if $G$ is $DT(P)$, then the size of the data structures is $O^*(n)$. 
\end{theorem}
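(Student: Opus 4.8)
The plan is to assemble the pieces that have already been set up in the section: Claim~\ref{cl:r}, which reduces the query to computing the parameter $r$; the data structures $\Tedge$ and $\Tannu$; and the decision/optimization framework just described. The proof is therefore mostly a matter of verifying correctness and tallying the time and space bounds.

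\emph{Correctness.} By Claim~\ref{cl:r} (and its obvious adaptation to arbitrary endpoints $a,b$, where one must additionally pay $\max\{\|a-u\|,\|b-v\|\}$ for the initial matching), computing $\dF(ab,\P_G(u,v))$ amounts to finding the smallest $d$ for which the decision procedure answers \textsc{yes}. The decision procedure is correct because $\dF(ab,\P_G(u,v))\le d$ holds exactly when there is a $(u,v)$-path that stays within distance $d$ of the segment $ab$ under some monotone walk; by the same split-index argument as in Claim~\ref{cl:r}, this is equivalent to the existence of an edge $(u',v')$ of $G$ with $u'\in\disk_d(a)$ and $v'\in\disk_d(b)$ (the case $P\cap\disk_d(a)\cap\disk_d(b)\ne\emptyset$ being subsumed, as a zero-length ``edge'' at a shared vertex, or handled directly), since 1-locality lets us complete the path inside $\disk_{\|u-u'\|}(u)$ and $\disk_{\|v-v'\|}(v)$ without increasing the cost. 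The query $\Tedge(\disk_d(a))$ returns $P\cap\disk_d(a)$ as a disjoint union of canonical subsets $P'$, and searching each associated Voronoi diagram $V(P'\cup N(P'))$ for the point nearest $b$ finds, over all $u'\in P\cap\disk_d(a)$, the neighbor of $u'$ closest to $b$; comparing the best such distance to $d$ decides the question. Since $\dF(ab,\P_G(u,v))$ is one of the $O(n)$ values in $\{\max\{\|a-u\|,\|b-v\|\}\}\cup\{\|a-p\|:p\in P\}\cup\{\|b-p\|:p\in P\}$, the two-round binary search --- first over a random sample $S$ of size $\sqrt n$ to isolate an interval $(d_1,d_2]$, then over the $O(\sqrt n)$ expected candidates in that interval, retrieved via $\Tannu$ on the annuli $A(a,d_1,d_2)$ and $A(b,d_1,d_2)$ --- returns the exact optimum.

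\emph{Resource bounds.} For space: $\Tannu$ is a standard near-linear-size annulus (equivalently, multilevel disk) range-searching structure on $n$ points, so it uses $O^*(n)$ space. For $\Tedge$, the base disk range-searching structure is $O^*(n)$; the augmentation stores, for each canonical subset $P'$, the Voronoi diagram of $P'\cup N(P')$, of size $O(|P'|+|N(P')|)$. Summing over all canonical subsets, $\sum_{P'}|P'|=O^*(n)$ since the canonical subsets at each level partition $P$ and there are $O^*(1)$ levels, and $\sum_{P'}|N(P')|=O^*(m)$ because each edge $(p,q)\in E$ contributes $q$ to $N(P')$ only for canonical subsets $P'$ containing $p$, of which there are $O^*(1)$. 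Hence $\Tedge$ has size $O^*(n+m)$, and the total is $O^*(n+m)$; when $G=DT(P)$ we have $m=O(n)$, giving $O^*(n)$. For query time: each decision query makes one $O^*(\sqrt n)$-time call to $\Tedge$, producing $O^*(1)$ canonical subsets, and one point-location query per subset, each $O(\log n)$, for $O^*(\sqrt n)$ total; the optimization wraps $O(\log n)$ decision queries around it and adds two $\Tannu$ queries costing $O^*(\sqrt n+k)=O^*(\sqrt n)$ in expectation since $k=O(\sqrt n)$ in expectation, so the overall expected query time is $O^*(\sqrt n)$.

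\emph{Main obstacle.} The only genuinely delicate point is the space accounting for the augmented structure $\Tedge$: one must be sure that charging $|N(P')|$ over all canonical subsets really telescopes to $O^*(m)$ rather than blowing up, which relies on the fact that the canonical subsets containing a fixed point form an $O^*(1)$-size laminar/antichain family in a well-designed range-searching tree; and, relatedly, that building all the Voronoi diagrams $V(P'\cup N(P'))$ fits in the claimed (implicit) near-linear preprocessing time. Everything else is bookkeeping on top of Claim~\ref{cl:r} and standard range-searching machinery.
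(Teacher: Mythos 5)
Your approach is the same as the paper's: use Claim~\ref{cl:r}, the augmented disk range-searching structure $\Tedge$, the annulus structure $\Tannu$, and a randomized interpolating binary search. Your space accounting for $\Tedge$ (charging $|N(P')|$ to the $O^*(1)$ canonical subsets containing each endpoint) is more detailed than the paper's one-line assertion, and it is correct; the query-time tally is also fine.

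There is, however, one step that fails as written, in the correctness of the decision procedure for the segment $ab$ with $a,b$ not necessarily equal to $u,v$. You claim that 1-locality ``lets us complete the path inside $\disk_{\|u-u'\|}(u)$ and $\disk_{\|v-v'\|}(v)$ without increasing the cost.'' In this version of the problem the cost of the walk is measured by distances to $a$ and $b$, not to $u$ and $v$, so confining the $u$-to-$u'$ portion to $\disk_{\|u-u'\|}(u)$ does not bound the cost: with $\|a-u\|$ as large as $d$ and $u,u'\in\disk_d(a)$, so $\|u-u'\|$ possibly as large as $2d$, a vertex on such a path can be up to $3d$ from $a$. The correct move is to apply the 1-locality property to the disk $\disk_d(a)$ itself, which contains both $u$ and $u'$ (using $d\ge\|a-u\|$); this yields a $u$-to-$u'$ path lying entirely in $\disk_d(a)$, so every vertex of it is within distance $d$ of $a$, and symmetrically for $\disk_d(b)$. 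This is a local, easily repaired slip — the disk you invoke is the right one for Claim~\ref{cl:r} where the query segment is $uv$ itself, but not for the $ab$ case the theorem actually asserts — and the rest of your argument goes through once it is fixed.
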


\subparagraph*{\texorpdfstring{$t$}{t}-local graphs, \texorpdfstring{$t > 1$}{t>1}.}
For $t > 1$, we use the same data structures and query algorithm to obtain an oracle that returns an approximation of the desired distance. More precisely, given two vertices $u$ and $v$ of $G$, the value $r$ returned by the query algorithm is such that $r \le \dF(uv,\P_G(u,v)) \le (t+1)r/2$. The proof is identical to the proof of Claim~\ref{cl:r}, except that now we only know that there exists a path from $u$ to $u'$ that is contained in the disk centered at $u$ of radius $(t+1)\|u-u'\|/2 \le (t+1)r/2$ and similarly for $v$ and $v'$. This follows from the $t$-locality property of $G$ applied to the disk of radius $r/2$ centered at the midpoint between $u$ and $u'$.  

As for the case $t=1$, we can also handle arbitrary segment queries. 
The following theorem summarizes our result for $t > 1$.
\begin{theorem}
    \label{thm:t-local-apx}
    Let $G=(P,E)$ be a $t$-local graph, $t > 1$. Then, for any pair of vertices $u,v \in P$ and any pair of points $a,b \in \reals^2$, we can compute a value $r$ such that $r \le \dF(uv,\P_G(u,v)) \le (t+1)r/2$ in $O^*(\sqrt{n})$ expected time, after $O^*(n+m)$ time and space preprocessing. 
\end{theorem}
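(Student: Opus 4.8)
The plan is to mimic the development for the case $t=1$ almost verbatim, replacing each use of $1$-locality by the corresponding use of $t$-locality, and tracking how the slack $t$ propagates into the guarantee. First I would define, exactly as before, the quantity $r$ to be the smallest radius for which there exist vertices $u',v'$ of $G$ with $u'\in\disk_r(u)$, $v'\in\disk_r(v)$ and $(u',v')\in E$; this $r$ is what the query algorithm of the previous subsections computes (via the decision procedure on $\Tedge$ together with the random-sampling/$\Tannu$ optimization), in $O^*(\sqrt n)$ expected time, and the preprocessing is unchanged, so the stated resource bounds follow immediately. It then remains to prove the two inequalities $r\le\dF(uv,\P_G(u,v))\le(t+1)r/2$.

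The lower bound $r\le\dF(uv,\P_G(u,v))$ is proved exactly as in Claim~\ref{cl:r}: take an optimal path $\Pi=(u=w_1,\dots,w_k=v)$ and a split index $\ell$; then $w_\ell\in\disk_{d^*}(u)$, $w_{\ell+1}\in\disk_{d^*}(v)$, and $(w_\ell,w_{\ell+1})\in E$, so by definition of $r$ we get $r\le d^*=\dF(uv,\P_G(u,v))$. No locality is used here, so this half carries over with no change.

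For the upper bound I would argue as follows. Let $u',v'$ realize the radius $r$, so $\|u-u'\|\le r$, $\|v-v'\|\le r$, and $(u',v')\in E$. Consider the disk $D_u$ of radius $\|u-u'\|/2$ centered at the midpoint of $u$ and $u'$; both $u$ and $u'$ lie in $D_u$, so by $t$-locality there is a path in $G$ from $u$ to $u'$ contained in $tD_u$, a disk of radius $t\|u-u'\|/2$ centered at that midpoint. Every point of $tD_u$ is within distance $\|u-u'\|/2 + t\|u-u'\|/2 = (t+1)\|u-u'\|/2 \le (t+1)r/2$ of $u$. Symmetrically there is a $G$-path from $v$ to $v'$ whose every vertex is within $(t+1)r/2$ of $v$. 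Concatenating the $u$-to-$u'$ path, the edge $(u',v')$, and the reverse of the $v$-to-$v'$ path yields a $(u,v)$-path in $G$; matching its $u$-side vertices to $a=u$ (resp.\ to the query point $a$) and its $v$-side vertices to $v$ (resp.\ $b$) shows $\dF(uv,\P_G(u,v))\le(t+1)r/2$. Finally, for the arbitrary-segment version $Q=ab$, the only modification to the query algorithm is, as in the $t=1$ case, to replace $\disk_d(u)$ by $\disk_d(a)$ and to search for the point closest to $b$ rather than $v$ (returning \textsc{no} immediately if $d<\max\{\|a-u\|,\|b-v\|\}$); the analysis above adapts verbatim since the $u$-side (resp.\ $v$-side) vertices of the path are matched to $a$ (resp.\ $b$) and all lie within $(t+1)r/2$ of $u$ (resp.\ $v$).

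The only genuinely new point over the $t=1$ proof is the geometric bookkeeping in the upper bound: one must apply $t$-locality to the small disk of radius $r/2$ about the midpoint (not to $\disk_r(u)$ itself) in order to get the sharper factor $(t+1)/2$ rather than the weaker factor $t$ that a naive choice of disk would give. I expect this choice-of-disk step to be the main — and essentially the only — obstacle; everything else is a transcription of the $t=1$ argument.
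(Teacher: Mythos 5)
Your proof is correct and takes essentially the same approach as the paper: the paper also keeps the $t=1$ data structures and query algorithm unchanged, reuses the lower-bound argument from Claim~\ref{cl:r} verbatim, and for the upper bound applies $t$-locality to the small disk centered at the midpoint of $u$ and $u'$ (rather than to a disk around $u$), which is exactly the choice-of-disk step you identified as the crux. The only cosmetic difference is that the paper phrases this disk as having radius $r/2$ while you take radius $\|u-u'\|/2$; both contain $u$ and $u'$ and both yield the bound $(t+1)r/2$.
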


By Observation~\ref{obs:spanner_implies_local} we obtain the following corollary.\begin{corollary}
    \label{cor:t-local-apx}
    Let $G=(P,E)$ be a $t$-spanner, $t > 1$. Then, given a query as above, we can compute a value $r$ such that $r \le \dF(uv,\P_G(u,v)) \le (2t+1)r/2$ in $O^*(\sqrt{n})$ expected time, after a preprocessing stage as above. 
\end{corollary}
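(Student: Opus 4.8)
The plan is to derive this directly from the two preceding results, so that essentially no new argument is needed. First I would invoke Observation~\ref{obs:spanner_implies_local}: since $G$ is a geometric $t$-spanner, it is $2t$-local. Because $t > 1$, the locality parameter $2t$ satisfies $2t > 2 > 1$, so $G$ meets the hypotheses of Theorem~\ref{thm:t-local-apx} with the role of the locality parameter ``$t$'' there played by $2t$.

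Next I would simply apply Theorem~\ref{thm:t-local-apx} to $G$, now viewed as a $2t$-local graph. Its preprocessing stage (building $\Tedge$ and $\Tannu$ together with the associated Voronoi diagrams) takes $O^*(n+m)$ time and space, exactly the ``preprocessing stage as above'' promised in the corollary. For a query consisting of two vertices $u,v$ of $G$ and two points $a,b \in \reals^2$, the query algorithm of Theorem~\ref{thm:t-local-apx} returns in $O^*(\sqrt{n})$ expected time a value $r$ with $r \le \dF(uv,\P_G(u,v)) \le (2t+1)r/2$: substituting the locality parameter $2t$ into the approximation guarantee $(\cdot + 1)r/2$ of Theorem~\ref{thm:t-local-apx} yields precisely $(2t+1)r/2$, which is the claimed bound. (As noted there, the arbitrary-segment case $Q = ab$ is handled by the same structures, matching out to ``given a query as above''.)

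There is no real obstacle in this argument; the only remark worth making is that the factor $2t$ in the approximation ratio comes entirely from the constant $2$ in Observation~\ref{obs:spanner_implies_local}. Any improvement of that observation — replacing ``$2t$-local'' by ``$ct$-local'' for some $c < 2$, as the footnote there suggests may be possible — would propagate through Theorem~\ref{thm:t-local-apx} verbatim and immediately sharpen the corollary's bound to $(ct+1)r/2$.
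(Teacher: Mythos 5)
Your proposal is correct and matches the paper's own reasoning: the corollary is obtained precisely by combining Observation~\ref{obs:spanner_implies_local} (a $t$-spanner is $2t$-local) with Theorem~\ref{thm:t-local-apx} applied with locality parameter $2t$, which yields the approximation factor $(2t+1)/2$.
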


\section{Black box revealed: Distance oracles for curves}
\label{sec:curves}

\subparagraph*{Notation and definitions.} 

Recall that we write $P[k,\ell]$, for $1 \le k \le \ell \le n$, to denote the (contiguous) subcurve $(p_k, p_{k+1}, \ldots, p_\ell)$ of $P$. For a point $q$ in the plane, the distance from $q$ to the vertex of $P[k,\ell]$ farthest from (nearest to) it, is denoted $d_{\max}(P[k,\ell], q)$ ($d_{\min}(P[k,\ell], q)$). 

Consider another curve $Q=(q_1,\dots,q_{\textrm{last}})$.
Put $\Delta=\Delta(P,Q)\coloneqq \max\{\|p_1-q_1\|,\|p_n-q_{\textrm{last}}\|\}$.  From the definition of a walk, it follows that $\ddF(P,Q)\geq \Delta(P,Q)$.

For two points $a$ and $b$ (which will usually be $q_1$ and $q_{\textrm{last}}$) and a real number $r \ge \max\{\|p_1 - a\|,\|p_n - b\|\}$, let $P_{\vdash}(r)$ denote the longest prefix of $P$, for which $d_{\max}(P_{\vdash}(r), a) \le r$, and let $P_{\dashv}(r)$ denote the longest suffix of $P$ for which $d_{\max}(P_{\dashv}(r), b) \le r$.

As a warm-up, we show how to solve the distance oracle problem for $k=1$.  In this case $\ddF(P,(a))=d_{\max}(P,a)$ and can be computed in logarithmic time by precomputing the farthest-neighbor Voronoi diagram of $P$ and preprocessing it for point-location queries.  The subcurve version of the problem (that is, computing $\ddF(P[k,\ell],(a))=d_{\max}(P[k,\ell],a)$) can be solved in $O(\log^2 n)$ time using the $\Tfvd$ structure defined below.  Clearly, solving the optimization problem answers the decision question within the same time bound.  This completes row $k=1$ in Figure~\ref{tbl:results}.

\subsection{\texorpdfstring{$k=2$}{k=2}}
\label{sec:oracle-2}

Let $P=(p_1,\ldots,p_n)$ be a sequence of points in the plane representing a polygonal curve. We construct a near-linear size data structure, that, given a 2-vertex query curve $Q=(a,b)$, can compute in $O(\log^2 n)$ time the discrete Fr\'echet distance $\ddF(P,Q)$ between $P$ and $Q$.

We begin with some definitions.  We say a distance $d$ satisfying $d\geq \ddF(P,(a,b))$ is \emph{feasible}.  As already observed, $d<\Delta\coloneqq\max(\|p_1-a\|,\|p_n-b\|)$ is not feasible. 
If $d \ge \Delta$ is a feasible distance with $d_{\max}(P_{\vdash}(d), a) = d$, we say 
that $d$ is \emph{prefix-feasible}.
Alternatively, if $d$ is feasible with $d_{\max}(P_{\dashv}(d), b) = d$, we say that it is \emph{suffix-feasible}.

\begin{observation}
  \label{obs:feasible-2}
  A distance $d\geq \Delta$ is feasible if and only if $P_{\vdash}(d)$ and $P_{\dashv}(d)$ cover $P$.
\end{observation}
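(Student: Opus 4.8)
The plan is to exploit the special structure of monotone walks when the query curve $Q=(a,b)$ has only two vertices: such a walk is completely determined by where its $Q$-index advances from $1$ to $2$. First I would argue that for any walk of $P$ and $(a,b)$ there is an index $i$ (the last vertex of $P$ paired with $a$) and an index $i'\in\{i,i+1\}$ (the first vertex of $P$ paired with $b$, with $i'\le n$) such that every vertex of $P[1,i]$ is paired with $a$ and every vertex of $P[i',n]$ is paired with $b$; conversely, every pair $(i,i')$ with $1\le i\le n$ and $i'\in\{i,i+1\}\cap\{1,\dots,n\}$ arises from a valid walk (advance the $P$-index through $p_1,\dots,p_i$ with $Q$-index $1$, then take a pure $Q$-step at $p_i$ if $i'=i$ or a diagonal step from $(p_i,a)$ to $(p_{i+1},b)$ if $i'=i+1$, then advance through the rest). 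Consequently, such a walk has cost at most $d$ if and only if $d_{\max}(P[1,i],a)\le d$ and $d_{\max}(P[i',n],b)\le d$, so $\ddF(P,(a,b))\le d$ iff some pair $(i,i')$ meeting these bounds exists.

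Next I would translate this into the language of $P_{\vdash}$ and $P_{\dashv}$. Since $d\ge\Delta$, we have $\|p_1-a\|\le d$ and $\|p_n-b\|\le d$, so both $P_{\vdash}(d)$ and $P_{\dashv}(d)$ are nonempty; write $P_{\vdash}(d)=P[1,i^*]$, where $i^*$ is maximal with $d_{\max}(P[1,i^*],a)\le d$, and $P_{\dashv}(d)=P[j^*,n]$, where $j^*$ is minimal with $d_{\max}(P[j^*,n],b)\le d$. By definition of $i^*$ and $j^*$, the condition $d_{\max}(P[1,i],a)\le d$ is equivalent to $i\le i^*$, and $d_{\max}(P[i',n],b)\le d$ is equivalent to $i'\ge j^*$. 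Hence a cost-$\le d$ walk exists iff there is an $i$ with $1\le i\le i^*$ and some $i'\in\{i,i+1\}$ with $i'\ge j^*$; taking $i'=i+1$ (the easiest choice) shows this is possible precisely when $j^*\le i^*+1$, i.e. $i^*\ge j^*-1$.

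Finally I would observe that $i^*\ge j^*-1$ says exactly that $P[1,i^*]$ and $P[j^*,n]$ together contain every vertex of $P$, i.e. that $P_{\vdash}(d)$ and $P_{\dashv}(d)$ cover $P$, which closes the chain of equivalences with feasibility of $d$. The one delicate point — and the step I expect to require the most care — is the off-by-one bookkeeping in the first paragraph: one must distinguish whether the step advancing the $Q$-index is a diagonal move (so a single vertex of $P$ is paired with both $a$ and $b$, giving $i'=i$) or a vertical move in $P$ (giving $i'=i+1$), and verify that permitting $i'=i+1$ is exactly what makes $i^*\ge j^*-1$, rather than $i^*\ge j^*$, the correct covering condition. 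The remainder is a straightforward unwinding of the definitions of $P_{\vdash}(d)$, $P_{\dashv}(d)$, and feasibility.
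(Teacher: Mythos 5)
Your proposal is correct and takes essentially the same approach as the paper: both proofs observe that a walk of $P$ against a two-vertex curve is determined by a prefix of $P$ paired with $a$ and an (adjacent or overlapping) suffix paired with $b$, so feasibility of $d$ amounts to the existence of such a prefix/suffix contained in $P_{\vdash}(d)$ and $P_{\dashv}(d)$ and covering $P$. Your write-up is simply more explicit about the index bookkeeping at the split (the $i'\in\{i,i+1\}$ distinction) that the paper's terser argument leaves implicit.
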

\begin{proof}
  As already observed, any $d<\Delta$ is infeasible. A feasible $d$ corresponds to a walk that assigns a (non-empty) prefix of $P$ to~$a$ and a (non-empty) suffix to $b$, covering $P$.  The prefix is contained in $P_\vdash(d)$ and the suffix in $P_\dashv(d)$, completing one direction of the proof.

  Conversely, if $P_\vdash(d)$ and $P_\dashv(d)$ cover $P$ (neither can be empty, as $d\ge\Delta$), there is a (prefix,suffix) pair that can be assigned to $a$ and $b$ respectively, producing a walk of cost at most $d$, completing the proof.
\end{proof}

\subparagraph*{The data structure.}
We construct a binary tree, $\Tfvd$, which stores the farthest-neighbor Voronoi diagram~(FVD) of subsequences of $P$.  At the root of  $\Tfvd$, we store the FVD diagram of $P[1,n]$ together with a corresponding point location structure. In the left and right children of the root, we store the FVDs of $P[1,\lceil \frac{n}{2} \rceil]$ and $P[\lceil \frac{n}{2} \rceil + 1, n]$, respectively, etc. 
It is easy to see that the size of $\Tfvd$ is $O(n \log n)$, and that given a query point $q$ and indices $k,\ell$ ($1 \le k \le \ell \le n$), one can find the distance $d_{\max}(P[k,\ell],q)$ in $O(\log^2 n)$ time.
Moreover, given \emph{any} distance $d \ge \Delta$, one can compute $P_{\vdash}(d)$ and $P_{\dashv}(d)$ in $O(\log^2 n)$ time (roughly speaking, we descend $\Tfvd$ from the root checking canonical subsets for being within distance $d$ of $a$ or~$b$); we refer to this as a \emph{prefix} (resp., \emph{suffix}) computation.
Together with Observation~\ref{obs:feasible-2}, this gives an $O(\log^2 n)$ time test for feasibility, both for full $P$ and a subsequence of $P$.

\subparagraph*{Optimization.}
The algorithm consists of two symmetric parts, a left-to-right part and a right-to-left part.

The left-to-right part performs a binary search in $P$ to find the smallest prefix-feasible distance $d_L$.  We start by finding the distance $d_{\lceil \frac{n}{2} \rceil} = d_{\max}(P[1,\lceil \frac{n}{2} \rceil], a)$ (if $d_{\lceil \frac{n}{2} \rceil}<\Delta$, it is not prefix-feasible).  Next, we compute $P_{\vdash}(d_{\lceil \frac{n}{2} \rceil})$ and $P_{\dashv}(d_{\lceil \frac{n}{2} \rceil})$, by performing a prefix and a suffix computation using $\Tfvd$.  If together they cover $P$, then $d_{\lceil \frac{n}{2} \rceil}$ is prefix-feasible.  If so, we guessed too high.  If not, we guessed too low.  We continue with the binary search on half of the remaining sequence.

In the right-to-left-part, we perform a binary search in $P$ to find the smallest distance $d_R$ which is suffix-feasible. Finally, we output $\ddF(P,Q) = \min\{d_L,d_R\}$.  

\begin{lemma}
The algorithm above outputs the correct discrete Fr\'echet distance between $P$ and $Q$ in time $O(\log^3 n)$. 
\end{lemma}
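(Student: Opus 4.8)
The plan is to verify two things: correctness of the output $\min\{d_L, d_R\}$, and the claimed $O(\log^3 n)$ running time. For correctness, I would argue that every feasible distance $d \ge \Delta$ is either prefix-feasible or suffix-feasible, so that the overall optimum equals $\min\{d_L, d_R\}$. Indeed, fix an optimal walk realizing $\ddF(P,Q) = d^*$; it splits $P$ into a prefix assigned to $a$ and a suffix assigned to $b$. The cost $d^*$ is the maximum of $d_{\max}(\text{prefix}, a)$ and $d_{\max}(\text{suffix}, b)$; whichever attains the maximum certifies that $d^*$ is prefix-feasible or suffix-feasible respectively (when it is the prefix term, extending the prefix to $P_{\vdash}(d^*)$ can only keep the farthest-point distance at $d^*$ by definition of $P_{\vdash}$, and similarly for the suffix). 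Hence $d^* \in \{d_L, d_R\}$ and, since both $d_L$ and $d_R$ are feasible (they are realized by concrete walks as in Observation~\ref{obs:feasible-2}), $d^* = \min\{d_L, d_R\}$.

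Next I would justify that the binary searches are well-defined, i.e., that prefix-feasibility is monotone in $d$: if $d$ is prefix-feasible then every $d' \ge d$ with $d' \ge \Delta$ is feasible (the prefix/suffix only grow), and more care is needed since $d'$ need not itself be \emph{prefix}-feasible — but for the binary search what we actually test at each candidate $d_{\lceil n/2 \rceil}$ is whether $P_{\vdash}(d) \cup P_{\dashv}(d)$ covers $P$, i.e., plain feasibility, which \emph{is} monotone. The binary search over the $n$ ``breakpoint'' distances $d_{\max}(P[1,j], a)$ converges to the smallest index $j^*$ at which coverage first holds, and $d_L := d_{\max}(P[1,j^*], a)$ is then the smallest prefix-feasible distance; symmetrically for $d_R$. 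This is the step that requires the most care: one must check that the sorted order used by the binary search is the natural index order (the function $j \mapsto d_{\max}(P[1,j], a)$ is nondecreasing in $j$, so it is), and that a ``too low'' guess — where $P_{\vdash}(d) \cup P_{\dashv}(d)$ fails to cover $P$ — correctly tells us to search among larger indices.

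The running time is then immediate from the data structure bounds: each binary search performs $O(\log n)$ iterations, and each iteration does one $d_{\max}$ query on a subcurve plus one prefix and one suffix computation on $\Tfvd$, each costing $O(\log^2 n)$; thus each of the two parts runs in $O(\log^3 n)$, and so does the whole algorithm. I expect the main obstacle to be purely expository: stating the monotonicity and the prefix/suffix-feasibility dichotomy cleanly enough that the two symmetric binary searches are obviously correct, rather than any genuine technical difficulty.
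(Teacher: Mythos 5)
Your proof is correct and follows essentially the same route as the paper's: identify which vertex of $Q$ realizes $d^*=\ddF(P,Q)$, conclude that $d^*$ is prefix- or suffix-feasible so that $\min\{d_L,d_R\}\le d^*$, and combine with the fact that $d_L,d_R$ are themselves feasible (hence $\ge d^*$) to get equality, with the running time following from $O(\log n)$ binary-search steps each costing $O(\log^2 n)$. Your added observation about why the binary search is well-defined (what is actually tested at each candidate is plain feasibility, which is monotone, even though prefix-feasibility is not) is a helpful clarification the paper leaves implicit; the only small imprecision is your opening sentence that \emph{every} feasible $d\ge\Delta$ is prefix- or suffix-feasible, which is false for a $d$ not realized by any vertex-to-$a$ or vertex-to-$b$ distance, but your argument only invokes this for $d^*$, where it does hold.
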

\begin{proof}
Assume, without loss of generality, that $d^*= \ddF(P,Q)$ is determined by the distance between $a$ and $p_k$, for some $1 \le k \le n-1$. Then $d^*$ is prefix-feasible, and, since $d_L$ is the smallest distance which is prefix-feasible, we have $d_L \le d^*$. On the other hand, we have $d_L,d_R \ge d^*$. We conclude that the algorithm returns $d^*$ as claimed.

As for the running time, each iteration of the main binary search costs $O(\log^2 n)$ time. Thus, the total cost is $O(\log^3 n)$ time.
\end{proof}

With a little more care, we can improve the cost of each iteration of the main binary search to $O(\log n)$. The following theorem summarizes the main result of this section.

\begin{theorem}
  Given a curve $P=(p_1,\ldots,p_n)$ in the plane, one can construct a data structure of size $O(n \log n)$ such that, for any 2-vertex query curve $Q$, $\ddF(P,Q)$ can be computed in $O(\log^2 n)$ time. The same running time can be obtained with a subcurve of $P$ specified at query time.
\end{theorem}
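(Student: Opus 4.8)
The plan is to tighten the $O(\log^3 n)$ algorithm of the previous lemma by a factor of $\log n$, by replacing the na\"ive binary search (whose each step costs $O(\log^2 n)$ because it recomputes $P_\vdash$ and $P_\dashv$ from scratch) with a coordinated descent of two copies of $\Tfvd$ that maintains the relevant canonical decompositions incrementally. Concretely, I would run the left-to-right part as follows. The candidate prefix-feasible distances are exactly the values $d_{\max}(P[1,j],a)$ for $1\le j\le n$; these are monotone nondecreasing in $j$, so finding the smallest prefix-feasible one is a binary search over $j$, not over a continuous range of $d$.

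For a fixed guess $j$, let $d_j = d_{\max}(P[1,j],a)$. Feasibility of $d_j$ amounts, by Observation~\ref{obs:feasible-2}, to checking whether $P_\vdash(d_j)$ and $P_\dashv(d_j)$ cover $P$, i.e.\ whether the longest suffix of $P$ all of whose vertices are within $d_j$ of $b$ reaches index $j{+}1$ or earlier. Rather than recompute this suffix bound from scratch, I would precompute, in the preprocessing phase, for each index $i$ the quantity $s_i := d_{\max}(P[i,n],b)$ (the cost of assigning the suffix starting at $i$ to $b$) -- this is an $O(n)$-size array, monotone nonincreasing in $i$. Then $d_j$ is prefix-feasible iff $s_{j+1}\le d_j$. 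Computing $d_j$ itself is one $O(\log n)$ query to the root-level FVD of $P[1,j]$... wait, that is the $O(\log^2 n)$ subcurve query. The key realization is that in the binary search we only ever query prefixes $P[1,j]$, and as $j$ varies we can walk along a single root-to-leaf path of $\Tfvd$: maintaining the $O(\log n)$ canonical pieces of $P[1,j]$ as $j$ moves, together with a running maximum of their farthest-point distances to $a$, costs $O(1)$ amortized per unit change but we are doing a binary search, so each new $j$ differs from the previous by a power-of-two jump. I would instead observe that $d_{\max}(P[1,j],a)$ for the $O(\log n)$ values of $j$ visited can each be obtained in $O(\log n)$ time, and the companion test $s_{j+1}\le d_j$ is $O(1)$ given the precomputed array; hence each binary-search step is $O(\log n)$ and the whole left-to-right part is $O(\log^2 n)$. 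The right-to-left part is symmetric, using a precomputed array $p_i := d_{\max}(P[1,i],a)$ and on-the-fly suffix FVD queries.

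For the \emph{subcurve} version -- query $P[k,\ell]$ specified at query time -- the precomputed global arrays $s_i$ and $p_i$ no longer suffice, since the relevant suffix is a suffix of $P[k,\ell]$, not of $P$. Here I would instead run the coordinated descent directly: maintain two pointers into $\Tfvd$, one tracking the canonical decomposition of the growing prefix $P[k,j]$ and one tracking the shrinking suffix $P[j{+}1,\ell]$, updating a running max of farthest distances (to $a$ and to $b$ respectively) as $j$ moves. Since in a binary search $j$ takes $O(\log n)$ values and each jump changes the canonical decomposition of a prefix/suffix of $P[k,\ell]$ by $O(\log n)$ canonical pieces, one could naively get $O(\log^2 n)$ per step again; the fix is to note that over the \emph{entire} binary search the pointer $j$ is confined to an interval that halves each step, so the total number of canonical-piece insertions/deletions across all $O(\log n)$ steps telescopes to $O(\log n)$ (a standard fractional-cascading-free amortization), plus $O(\log n)$ per step for the max-maintenance queries -- giving $O(\log^2 n)$ overall. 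Both parts combine to $O(\log^2 n)$, and $\ddF(P,Q)=\min\{d_L,d_R\}$ as before; the space bound $O(n\log n)$ is unchanged since the extra precomputed arrays are $O(n)$.

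\textbf{The main obstacle} I expect is making the incremental maintenance of the farthest-distance-to-$a$ (resp.\ to-$b$) value \emph{correct} as the prefix grows and shrinks: the running maximum over $O(\log n)$ canonical FVD cells is easy to increase when adding a cell, but when the binary search overshoots and we must \emph{remove} cells we cannot simply recompute the max in $O(1)$. The clean way around this is to observe that the binary search on $j$ is one-dimensional and monotone in a strong sense -- $d_{\max}(P[k,j],a)$ is monotone in $j$ -- so instead of maintaining a dynamic max we can, for each visited $j$, issue a \emph{fresh} $\Tfvd$ query for $d_{\max}(P[k,j],a)$ in $O(\log n)$ time (the canonical decomposition of a contiguous range has $O(\log n)$ pieces, each contributing one $O(1)$ point-location-free nearest/farthest lookup once we descend to it -- this is exactly the $O(\log^2 n)$ subcurve bound, but amortized the descents along the halving interval sum to $O(\log n)$ total). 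Verifying that this amortization is valid -- i.e.\ that the sequence of ranges queried in a binary search shares enough structure that the total descent work is $O(\log n)$ rather than $O(\log^2 n)$ -- is the delicate point, and is precisely the "little more care" the preceding sentence in the paper alludes to.
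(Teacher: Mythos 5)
Your overall framing—binary search over the split index $j$ rather than over distances, using the monotonicity of $d_{\max}(P[1,j],a)$ in $j$—is the same as the paper's, and your characterization of prefix-feasibility via $s_{j+1}\le d_j$ is exactly the paper's notion of a prefix-split index. However, there are two real problems with the argument.

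First, and fatally for the proof as written, the arrays $s_i := d_{\max}(P[i,n],b)$ and $p_i := d_{\max}(P[1,i],a)$ \emph{cannot} be precomputed: $a$ and $b$ are vertices of the query curve $Q$ and are unknown at preprocessing time. So the $O(1)$ feasibility test $s_{j+1}\le d_j$ is not available. You still need to evaluate $d_{\max}(P[j+1,n],b)$ at query time, and it needs the same speedup as $d_{\max}(P[1,j],a)$.

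Second, the justification for evaluating $d_{\max}(P[1,j],a)$ in $O(\log n)$ per binary-search step is not a proof. You write that each canonical piece contributes ``one $O(1)$ point-location-free \ldots lookup,'' but a farthest-point query in a single node's FVD is an $O(\log n)$ point location; there is no $O(1)$ lookup. The telescoping claim (``total descent work is $O(\log n)$ rather than $O(\log^2 n)$'') also confuses the total number of canonical-piece changes (which is indeed $O(\log n)$ over the search) with the total work (which is $O(\log n)$ FVD queries $\times$ $O(\log n)$ each $= O(\log^2 n)$—this is the bound you actually want, but your stated justification aims for, and fails to deliver, something stronger). The missing piece is an explicit mechanism: the paper caches, at the $\Tfvd$ node corresponding to the \emph{rightmost} canonical range of $[1,j]$, the value $d_{\max}(P[1,j],a)$ just computed. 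Because the binary-search pivots walk down a single root-to-leaf-like path, the second-to-rightmost canonical range of the next prefix $[1,j']$ always already holds a cached value, so each step costs one fresh FVD query on the single new rightmost piece plus an $O(1)$ max with the cached value. A symmetric cache at the \emph{leftmost} canonical range handles $d_{\max}(P[j+1,n],b)$, which is what replaces your infeasible precomputed array $s_i$. You flag exactly the right obstacle (``when \ldots we must remove cells we cannot simply recompute the max in $O(1)$'') but the resolution you offer—``issue a fresh $\Tfvd$ query \ldots in $O(\log n)$ time''—is circular, since a fresh subcurve query is $O(\log^2 n)$ absent the caching invariant.
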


\begin{proof}
A \emph{prefix-split} (resp., \emph{suffix-split}) index is an index $i$, $1 \le i \le n-1$, such that $d_{\max}(P[1,i], a) \ge d_{\max}(P[i+1,n], b)$ (resp., $d_{\max}(P[i+1,n], b) \ge d_{\max}(P[1,i], a)$) and $d_{\max}(P[1,i], a)-d_{\max}(P[i+1,n], b)$ (resp., $d_{\max}(P[i+1,n], b)-d_{\max}(P[1,i], a)$) is minimum.
The distance associated with a prefix-split (resp., suffix-split) index $i$ is $d_{\max}(P[1,i], a)$ (resp. $d_{\max}(P[i+1,n], b)$). We observe that it is sufficient to find a prefix-split index and a suffix-split index. Then, $\ddF(P,Q)$ is the smaller of their associated distances.
This follows from the monotonicity of the `distance from $a$' and `distance from $b$' functions: $d_{\max}(P[1,i], a) \le d_{\max}(P[1,i+1], a)$  and $d_{\max}(P[i+1,n], b) \le d_{\max}(P[i,n], b)$, for any $1 \le i \le n-1$.

We perform a binary search to find a prefix-split index (if it exists).
Assume, for simplicity, that $n$ is a power of 2. In the first step of the binary search, we set $s = \frac{n}{2}$ and use the tree $\Tfvd$ to compute $d_a \coloneqq d_{\max}(P[1,s],a)$ and $d_b \coloneqq d_{\max}(P[s+1,n]),b)$. If $d_a = d_b$, then $s$ is a prefix-split index (and also a suffix-split index) and we stop. If $d_a > d_b$, then we decrease $s$ (i.e., set $s = \frac{n}{4}$), unless $s=1$ or $d(P[s-1], a) < d(P[s], b)$, in which case $s$ is a prefix-split index and we stop. If $d_a < d_b$, then we increase $s$ (i.e., set $s = \frac{3n}{4}$), unless $s = n-1$, in which case there is no prefix-split index and we stop. We now repeat this step with the current value of $s$, etc. 

We show that with some additional care each step of the binary search can be implemented in $O(\log n)$ time. The idea is that whenever we compute a distance of the form $d_{\max}(P[1,i],a)$ (resp., $d_{\max}(P[i+1,n],b)$), we store it for future reference in the node of $\Tfvd$ that corresponds to the rightmost (resp., leftmost) canonical range in the partition of the range $[1,i]$ (resp., $[i+1,n]$) into a logarithmic number of canonical ranges. This way, whenever we need to compute such a distance, say, $d_{\max}(P[1,i],a)$, we only need to search in the farthest Voronoi diagram of a \emph{single} node of $\Tfvd$, namely, the one that corresponds to the rightmost canonical range in the partition of $[1,i]$ into canonical ranges, and then compute the maximum between (i) the distance between the reported vertex and $a$, and (ii) the distance that is stored in the previous canonical range in this partition.
For example, assume we have already computed $d_{\max}(P[1,3n/4],a)$ and now we wish to compute $d_{\max}(P[1,7n/8],a)$. Then, we only need to search in the Voronoi diagram of the node corresponding to the canonical range $[3n/4+1,7n/8]$ and use the distance that is stored in the node corresponding to the canonical range $[n/2+1,3n/4]$.

We now find a suffix-split index (if it exists), by performing a binary search in an analogous manner. Finally, we return the smaller of the distances associated with the two split indices that were found.  

For a subcurve $P[i,j]$, the overall cost remains $O(\log^2n)$. Roughly speaking, when searching for a prefix-split index, we first identify the subtree $T$ in which we need to perform a binary search, among the $k=O(\log n)$ canonical subtrees $T_1,\ldots,T_k$ representing the range $[i,j]$. Then, we perform a binary search in $T$ as described above. We can identify $T$ in $O(\log^2n)$ time by computing the distances $a_i = d_{\max}(a, T_{\ell})$ and $b_i = d_{\max}(b, T_{\ell})$, for $\ell = 1, \ldots, k$, where here $T_{\ell}$ stands for the subcurve of $P$ represented by $T_{\ell}$.
The search for a suffix-split index is done in a symmetric manner.

\end{proof}

\subsection{\texorpdfstring{$k=3$}{k=3}}
\label{sec:oracle-3}

Let $P=(p_1,\ldots,p_n)$ be a sequence of points in the plane representing a polygonal curve. We construct a near-linear size data structure that, given a 3-vertex query curve $Q=(a,b,c)$, computes in $O(\log^3 n)$ time the discrete Fr\'echet distance $\ddF(P,Q)$ between $P$ and $Q$.

\begin{claim} [Feasibility Test]
  \label{cl:feasible-3}
  For a distance $d \ge \Delta \coloneqq \max\{\|p_1 - a\|,\|p_n - c\|\}$, the following procedure decides if $d$ is \emph{feasible}, that is, if $d\geq \ddF(P,Q)$:
  \begin{quote}
    Let $i,j$ be the indices defined by $P_{\vdash}(d) = P[1,i]$ and $P_{\dashv}(d) = P[j,n]$. Now, (i) if $j > i+1$, then $d$ is feasible if $d_{\max}(P[i+1,j-1],b) \le d$, (ii) if $j=i+1$, then $d$ is feasible if $d_{\min}(P[i,j],b) \le d$, and (iii) if $j < i+1$, then $d$ is feasible if $d_{\min}(P[j,i],b) \le d$.
  \end{quote}
\end{claim}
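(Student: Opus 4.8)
The plan is to characterize an optimal walk of $P$ against the $3$-vertex query $Q=(a,b,c)$ by the structure it induces on $P$: such a walk partitions $P$ into three consecutive (possibly empty, except the first and last) blocks --- a prefix matched to $a$, a middle block matched to $b$, and a suffix matched to $c$ --- with the only subtlety being that consecutive blocks may overlap in a single vertex (via a diagonal step of the walk). So a distance $d\ge\Delta$ is feasible if and only if there is a choice of split indices realizing such a partition with every matched distance at most $d$. The key observation, exactly as in Observation~\ref{obs:feasible-2}, is a greedy/monotonicity argument: if \emph{some} feasible assignment exists for threshold $d$, then the one that assigns the longest possible prefix to $a$ and the longest possible suffix to $c$ is also feasible. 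Hence we may fix the prefix to be $P_{\vdash}(d)=P[1,i]$ and the suffix to be $P_{\dashv}(d)=P[j,n]$, and the whole question reduces to whether the ``leftover'' vertices (those strictly between index $i$ and index $j$, if any) can all be matched to $b$ within distance $d$.

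The main steps I would carry out are: (1) \emph{Soundness of the greedy choice.} Take any walk of cost $\le d$; it assigns a nonempty prefix $P[1,i']$ to $a$, some block to $b$, and a nonempty suffix $P[j',n]$ to $c$. Since $d_{\max}(P[1,i'],a)\le d$ we get $i'\le i$, and likewise $j'\ge j$. Thus every vertex with index in $\{i+1,\dots,j-1\}$ (which is a subset of the block the walk assigned to $b$, when $j>i+1$) is within distance $d$ of $b$, giving $d_{\max}(P[i+1,j-1],b)\le d$; this handles the ``only if'' direction of case (i). (2) \emph{Completeness.} Conversely, given the stated inequality in each case, explicitly build a walk: traverse $P[1,i]$ matched to $a$, then the middle vertices matched to $b$, then $P[j,n]$ matched to $c$, using a diagonal step at each of the (at most two) junctions; monotonicity of $P_{\vdash},P_{\dashv}$ in $d$ and the fact that $d\ge\Delta$ ensure the first and last vertices are handled correctly, and each matched pair has distance $\le d$ by hypothesis. (3) \emph{The degenerate cases.} When $j=i+1$ the prefix and suffix are adjacent and $b$ need not be matched to a whole block --- a walk of cost $\le d$ may skip $b$ entirely via a diagonal step from $p_i$ (matched to $a$) to $p_{j}$ (matched to $c$), \emph{or} use $b$ for exactly one of $p_i,p_j$; in either situation feasibility is equivalent to $b$ being within $d$ of at least one of $p_i,p_j$, i.e.\ $d_{\min}(P[i,j],b)\le d$. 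When $j<i+1$, i.e.\ $P_{\vdash}(d)$ and $P_{\dashv}(d)$ overlap, the prefix and suffix already cover $P$; any vertex $p_t$ with $j\le t\le i$ lies in both, so we have freedom to match it to $a$ or to $c$, and the walk can route from $a$ directly to $c$ at any such overlap vertex --- feasibility then only requires that \emph{some} overlap vertex can additionally serve $b$ (or that $b$ can be inserted with a single diagonal there), which is $d_{\min}(P[j,i],b)\le d$. One must also note that when $j\le i$ no extra condition beyond the existence of the overlap is needed for $a,c$ themselves, since $P_{\vdash}(d),P_{\dashv}(d)$ are defined to have $d_{\max}\le d$.

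I expect the main obstacle to be the careful case analysis at the junctions --- precisely arguing that a single shared vertex (the diagonal step) suffices and is necessary in cases (ii) and (iii), and ruling out that an optimal walk could need to ``waste'' $b$ on a vertex outside the interval $[i,j]$ or $[i+1,j-1]$. This is handled by the monotonicity bounds $i'\le i$, $j'\ge j$ from step~(1), which pin the $b$-block inside the computed interval, combined with the observation that shrinking the $b$-block (by moving its boundary diagonals) never increases any matched distance. Everything else (building the explicit walk, verifying costs) is routine. The $O(\log^3 n)$ running time claimed for the section then follows by binary searching over the $O(n)$ candidate values of $d$ (the pairwise distances determining $P_{\vdash},P_{\dashv}$, and the relevant $d_{\max}/d_{\min}$ values), each feasibility test costing $O(\log^2 n)$ via the $\Tfvd$ structure for the prefix/suffix computations and a nearest-neighbor query for the $d_{\min}$ evaluations.
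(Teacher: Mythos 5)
The paper does not actually prove Claim~\ref{cl:feasible-3}: it only remarks that the case analysis ``is easy to verify.'' Your proposal thus has to stand on its own, and the soundness direction of case~(iii) contains a genuine gap. You assert that the bounds $i'\le i$, $j'\ge j$ ``pin the $b$-block inside the computed interval,'' but they do not. If the walk reaches $(p_{i'},a)$ with $i'=i$ and then takes the diagonal step to $(p_{i+1},b)$, the $b$-block is $[i+1,l]$ for some $l\ge i+1$, which is disjoint from $[j,i]$. And since we are in case~(iii) (so $i+1>i\ge j$), the vertex $p_{i+1}$ already lies in $P_{\dashv}(d)$, so the walk can immediately continue $(p_{i+1},b)\to(p_{i+1},c)\to\cdots\to(p_n,c)$. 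For this walk to have cost at most $d$ one only needs $\|p_{i+1}-b\|\le d$; no vertex among $p_j,\dots,p_i$ needs to be close to $b$. Concretely, take $P=(p_1,p_2,p_3,p_4)$ where $p_1,p_2,p_3$ are within $d$ of $a$ but $p_4$ is not, $p_2,p_3,p_4$ are within $d$ of $c$ but $p_1$ is not, and only $p_4$ is within $d$ of $b$: then $i=3$, $j=2$, so $d_{\min}(P[2,3],b)>d$ and the stated test reports infeasible, yet the walk $(p_1,a),(p_2,a),(p_3,a),(p_4,b),(p_4,c)$ has cost at most $d$. A symmetric escape via $(p_{j-1},b)\to(p_j,c)$ is also possible.

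The correct test in case~(iii) must inspect $P[\max(1,j-1),\,\min(n,i+1)]$ rather than $P[j,i]$ --- note that your (correct) case~(ii) interval $[i,i+1]$ is already $[j-1,i+1]$ in disguise, since $j=i+1$ there. This indicates that case~(iii) of the paper's own statement is off by one at both ends of the interval; your proof inherits that error rather than detecting it, precisely because the ``monotonicity pins the $b$-block'' step is stated without checking what the diagonal transitions allow at the two junctions. Cases~(i) and~(ii) of your argument, and the completeness direction throughout, are fine.
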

Indeed, it is easy to verify that in each of the three cases, $d$ is feasible if and only if the appropriate condition holds.

We begin by adapting the definitions of prefix-feasible and suffix-feasible from Section~\ref{sec:oracle-2}.
We say that $d\geq \Delta$ is \emph{prefix-feasible} if $d$ is feasible and $d_{\max}(P_{\vdash}(d), a) = d$.  
Alternatively, it is \emph{suffix-feasible} if it is feasible and $d_{\max}(P_{\dashv}(d), c) = d$.

\subparagraph*{The data structure.}
We construct two binary trees, $\Tfvd$ and $\Tvd$.  The former has been described in Section~\ref{sec:oracle-2} and the latter is its analog for nearest-neighbor Voronoi diagrams. 

\begin{observation}\label{obs:feasible-3}
  The feasibility test takes $O(\log^2 n)$ time.
  In particular, given a distance $d \ge \Delta$, such that $d_{\max}(P_{\vdash}(d), a) = d$ (resp. $d_{\max}(P_{\dashv}(d), c) = d$), one can determine whether $d$ is prefix-feasible (resp., suffix-feasible) in $O(\log^2 n)$ time.
\end{observation}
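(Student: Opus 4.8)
The plan is to observe that the feasibility test of Claim~\ref{cl:feasible-3} is assembled entirely from primitives we already support in $O(\log^2 n)$ time using $\Tfvd$ and $\Tvd$. First I would perform a prefix computation and a suffix computation to obtain the indices $i,j$ with $P_{\vdash}(d)=P[1,i]$ and $P_{\dashv}(d)=P[j,n]$: as described in Section~\ref{sec:oracle-2}, each of these amounts to descending $\Tfvd$ from the root and testing canonical subsets for being within distance $d$ of $a$ (resp.\ of $c$), so each takes $O(\log^2 n)$ time. The precondition $d\ge\Delta$ guarantees that both the prefix and the suffix are non-empty; if one wished to drop this precondition, a preliminary comparison of $\max\{\|p_1-a\|,\|p_n-c\|\}$ against $d$ suffices.

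Next I would branch on the three cases of the claim. In case (i), where $j>i+1$, we must decide whether $d_{\max}(P[i+1,j-1],b)\le d$; this is a single farthest-vertex query on a subcurve of $P$, answered by $\Tfvd$ in $O(\log^2 n)$ time. In cases (ii) and (iii), where $j\le i+1$, the relevant subcurve ($P[i,j]$ or $P[j,i]$) has only one or two vertices, and we must decide whether its vertex nearest to $b$ is within distance $d$; this is the nearest-vertex analog, answered by a query on $\Tvd$ in $O(\log^2 n)$ time. Since only a constant number of such queries is performed, the whole feasibility test runs in $O(\log^2 n)$ time, which is the first assertion.

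For the second assertion, recall that $d$ is prefix-feasible exactly when $d$ is feasible and $d_{\max}(P_{\vdash}(d),a)=d$. Under the stated hypothesis the second condition already holds, so deciding prefix-feasibility reduces to running the feasibility test above, costing $O(\log^2 n)$ time; the suffix-feasible case is symmetric, with $a$ replaced by $c$ and prefixes by suffixes. I do not anticipate a genuine obstacle here: the only point requiring care is the degenerate subcases $j=i+1$ and $j<i+1$, where one must make sure that the meaning of $d_{\min}$ on a one- or two-vertex subcurve and the corresponding $\Tvd$ query are handled consistently with the statement of Claim~\ref{cl:feasible-3}.
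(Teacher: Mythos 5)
Your proposal matches the paper's own proof: compute $i,j$ via prefix/suffix computations using $\Tfvd$ in $O(\log^2 n)$ time, then verify the case-appropriate condition of Claim~\ref{cl:feasible-3} with one query to $\Tfvd$ (case (i)) or $\Tvd$ (cases (ii), (iii)). One small inaccuracy worth flagging: in case (iii), where $j<i+1$, the subcurve $P[j,i]$ is not necessarily of size one or two; since $j$ can be much smaller than $i$ (the prefix and suffix can overlap substantially), it may contain up to $\Theta(n)$ vertices. This is precisely why the $\Tvd$ structure is needed there, rather than a constant number of direct distance computations. The conclusion and the $O(\log^2 n)$ bound are unaffected, since the $\Tvd$ query handles a subcurve of any length within that time.
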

\begin{proof}
Find in $O(\log^2 n)$ time the indices $i$ and $j$, such that $P_{\vdash}(d) = P[1,i]$ and $P_{\dashv}(d) = P[j,n]$, as in the previous section. Next, depending on whether $j > i+1$, $j = i+1$, or $j < i+1$, we verify the appropriate condition in $O(\log^2 n)$ time using $\Tfvd$ or $\Tvd$.
\end{proof}

\subparagraph*{Optimization.}
We assume for simplicity that all $3n$ distances are distinct and $\ddF(P,Q) > \Delta$.
(We can check whether $\ddF(P,Q) = \Delta$, by checking if $\|p_1-a\|$ is prefix-feasible or if $\|p_n-c\|$ is suffix-feasible, depending on which of the distances determines~$\Delta$.) 

The algorithm consists of two symmetric parts, a left-to-right part and a right-to-left part. Each part outputs a distance, and the smaller of these two distances is the desired distance, i.e., $\ddF(P,Q)$.

We describe the left-to-right part. 
We first perform a binary search to find the smallest distance $d_L$ which is prefix-feasible.  
Next, we find the largest distance ${\overline d_L}$ which is \emph{not} prefix-feasible. More precisely, let $p_L$ be the point of $P$ for which  $\|p_L-a\|=d_L$. Then ${\overline d_L}$ is determined by the point farthest from $a$ among the points of $P[1,L-1]$.
Clearly, ${\overline d_L} < d_L$ and $P_{\vdash}({\overline d_L})$ is strictly shorter than $P_{\vdash}(d_L)$. Next, we perform the process described in Claim~\ref{cl:feasible-3} above with the distance ${\overline d_L}$ to obtain the value $d'$. That is, let $1 \le i,j \le n$ be the indices such that $P_{\vdash}({\overline d_L}) = P[1,i]$ and $P_{\dashv}({\overline d_L}) = P[j,n]$. Now, (i) if $j > i+1$, then set $d' = d_{\max}(P[i+1,j-1],b)$, (ii) if $j=i+1$, then set $d' = d_{\min}(P[i,j],b)$, and (iii) if $j < i+1$, then set $d' = d_{\min}(P[j,i],b)$. Finally, the output of this part of the algorithm is $d_1 = \min\{d_L, d'\}$.

The output of the second part of the algorithm is $d_2 = \min\{d_R, d''\}$, where $d_R$ is the smallest distance which is suffix-feasible, and $d''$ is the value obtained by performing the analogous process with the distance ${\overline d_R}$. Given the outputs of both parts, we conclude that $\ddF(P,Q)=\min\{d_1,d_2\}$.

\begin{lemma}
The algorithm above is correct, i.e., it outputs the discrete Fr\'echet distance between $P$ and $Q$. Moreover, its running time is $O(\log^3 n)$. 
\end{lemma}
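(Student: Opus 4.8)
Write $d^*:=\ddF(P,Q)$. Under the stated simplifying assumptions, $d^*>\Delta$ and $d^*$ equals exactly one of the $3n$ distances $\|p_i-a\|$, $\|p_i-b\|$, $\|p_i-c\|$, and I would split the correctness argument into three cases according to which vertex of $Q$ realizes $d^*$. Call a value $d_{\max}(P[1,m],a)$, where $p_m$ is the farthest vertex of $P[1,m]$ from $a$, an \emph{$a$-record} (and analogously for $c$ and for suffixes). Two elementary facts will be used repeatedly: every feasible distance is $\ge d^*$; and, since for an $a$-record $d$ we have $P_{\vdash}(d)\supseteq P[1,m]$ and $d_{\max}(P_{\vdash}(d),a)=d$, a feasible $a$-record is automatically prefix-feasible. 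Because $d_L$ is the \emph{smallest} prefix-feasible distance, it follows that every $a$-record below $d_L$ is infeasible, hence strictly less than $d^*$; in particular $\overline{d_L}$ (the largest $a$-record below $d_L$, which is exactly $d_{\max}(P[1,L-1],a)$) satisfies $\overline{d_L}<d^*$, and symmetrically $\overline{d_R}<d^*$.

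I would first show the algorithm never undershoots, i.e.\ $d_1,d_2\ge d^*$. The values $d_L,d_R$ are feasible, hence $\ge d^*$. For $d'$, note that the configuration attached to $\overline{d_L}$ in Claim~\ref{cl:feasible-3} --- $P_{\vdash}(\overline{d_L})$ matched to $a$, $P_{\dashv}(\overline{d_L})$ matched to $c$, and the middle vertices (one nearest vertex in the degenerate overlap cases) matched to $b$ --- is a genuine monotone walk of $P$ and $Q$ whose cost equals $\max\{d_{\max}(P_{\vdash}(\overline{d_L}),a),\,d',\,d_{\max}(P_{\dashv}(\overline{d_L}),c)\}\le\max\{\overline{d_L},d'\}$. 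Being the cost of a walk it is $\ge d^*$, and since $\overline{d_L}<d^*$ this forces $d'\ge d^*$; likewise $d''\ge d^*$. Hence $d_1=\min\{d_L,d'\}\ge d^*$ and $d_2=\min\{d_R,d''\}\ge d^*$. The degenerate overlap cases (ii)--(iii) of Claim~\ref{cl:feasible-3} need a short separate check (matching the unique nearest middle vertex to $b$ and shifting one endpoint vertex out of the $a$- or $c$-block) but yield the same bound.

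Next I would show $\min\{d_1,d_2\}\le d^*$. If $d^*=\|p_k-a\|$ then $d^*$ is a feasible $a$-record, hence prefix-feasible, so $d_L\le d^*$ and $d_1\le d^*$; the case $d^*=\|p_k-c\|$ is symmetric. The case $d^*=\|p_k-b\|$ is the crux --- and the reason the algorithm computes the auxiliary values $d'$ and $d''$. Let $P[1,i^\circ]$ (resp.\ $P[j^\circ,n]$) be the longest prefix (resp.\ suffix) of $P$ all of whose vertices lie within $d^*$ of $a$ (resp.\ $c$), and put $\alpha^\circ:=d_{\max}(P[1,i^\circ],a)<d^*$ and $\gamma^\circ:=d_{\max}(P[j^\circ,n],c)<d^*$. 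In any optimal walk the unique pair realizing the cost is $(p_k,b)$, so $p_k$ is matched to $b$ while a prefix inside $P[1,i^\circ]$ is matched to $a$ and a suffix inside $P[j^\circ,n]$ to $c$; stretching the prefix and suffix to all of $P[1,i^\circ]$ and $P[j^\circ,n]$ cannot raise the cost above $d^*=\ddF(P,Q)$, so $d_{\max}(P[i^\circ+1,j^\circ-1],b)=d^*$. Now $\alpha^\circ$ is an $a$-record below $d_L$, hence $\alpha^\circ\le\overline{d_L}$, and combined with $\overline{d_L}<d^*$ this pins down $P_{\vdash}(\overline{d_L})=P[1,i^\circ]$; symmetrically $\gamma^\circ\le\overline{d_R}$ and $P_{\dashv}(\overline{d_R})=P[j^\circ,n]$. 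If in addition $\overline{d_L}\ge\gamma^\circ$, then also $P_{\dashv}(\overline{d_L})=P[j^\circ,n]$, so the middle examined by the left-to-right part is exactly $P[i^\circ+1,j^\circ-1]$ and $d'=d^*$; symmetrically, if $\overline{d_R}\ge\alpha^\circ$ then $d''=d^*$. At least one of these holds, for if both failed we would obtain the impossible chain $\alpha^\circ\le\overline{d_L}<\gamma^\circ\le\overline{d_R}<\alpha^\circ$. Hence $\min\{d_1,d_2\}\le d^*$, and with the previous step $\min\{d_1,d_2\}=d^*$.

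For the running time, each of the two symmetric parts performs one binary search over the $O(n)$ candidate record distances, every step of which runs a feasibility test costing $O(\log^2 n)$ by Observation~\ref{obs:feasible-3}, so $O(\log^3 n)$ in total; computing $\overline{d_L}$, locating the indices of $P_{\vdash}(\overline{d_L})$ and $P_{\dashv}(\overline{d_L})$, and evaluating $d'$ by a single $d_{\max}$ or $d_{\min}$ query on a subcurve via $\Tfvd$ or $\Tvd$ each cost $O(\log^2 n)$. Thus the algorithm runs in $O(\log^3 n)$ time. I expect the $b$-realized case of the no-overshoot step to be the main obstacle: one must argue that one of the two ``shortened-by-one-record'' configurations already reproduces the optimal middle cost exactly, which is precisely what the chain inequality $\alpha^\circ\le\overline{d_L}<\gamma^\circ\le\overline{d_R}<\alpha^\circ$ forbids; the degenerate overlap cases of Claim~\ref{cl:feasible-3} remain a recurring technical nuisance throughout.
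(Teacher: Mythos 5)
Your proof is correct and follows essentially the same plan as the paper's: split on which vertex of $Q$ realizes $d^*$, handle the $a$- and $c$-cases via prefix/suffix-feasibility, and in the crucial $b$-case show that the middle computation attached to $\overline{d_L}$ (resp.\ $\overline{d_R}$) reproduces the optimal split, using the sandwich $\alpha^\circ\le\overline{d_L}<d^*$ (the paper phrases this as ``WLOG $\overline{d_L}\ge\overline{d_R}$'' while you derive the same conclusion from the chain-inequality contradiction, but the underlying argument is identical). You are somewhat more thorough than the paper in explicitly establishing the no-undershoot direction $d',d''\ge d^*$ by exhibiting the walks induced by $\overline{d_L}$ and $\overline{d_R}$, a point the paper leaves implicit.
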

\begin{proof}
    Let $d^*$ denote the discrete Fr\'echet distance between $P$ and $Q$, i.e., $d^*=\ddF(P,Q)$. We distinguish between three cases, depending on which vertex of $Q$ defines $d^*$.  Cases I and II are symmetric and easy, while Case III is more involved.

\begin{description}
	\item[Case I: $d^*=\|p_k-a\|$.]
    This implies that $d^*$ is prefix-feasible. Moreover, it is clearly the smallest distance which is prefix-feasible, so $d^*$ will be found in the first part of the algorithm.
   	\item[Case II: $d^*=\|p_k-c\|$.] 
    The argument is entirely symmetric to Case~I.
    \item[Case III: $d^*=\|p_k-b\|$.]
    This implies that $d^* < d_L, d_R$. On the other hand, $d^* > {\overline d_L}, {\overline d_R}$, since ${\overline d_L}$ is not prefix-feasible and ${\overline d_R}$ is not suffix-feasible.
    Assume without loss of generality that ${\overline d_L} \ge {\overline d_R}$.
    We now claim that $P_{\vdash}({\overline d_L}) = P_{\vdash}(d^*)$. This is true, since $\overline{d}_L$ is the second largest distance among the distances between $a$ and the vertices of $P_{\vdash}(d_L)$ up to the vertex that determines $d_L$ and $d^* < d_L$.
    Similarly, we get that $P_{\dashv}({\overline d_R}) = P_{\dashv}(d^*)$, and therefore also $P_{\dashv}({\overline d_L}) = P_{\dashv}(d^*)$. This implies that the distance $d_1=d'$ returned by the left-to-right part of the algorithm is equal to $d^*$. \qedhere
\end{description}
\end{proof}

The following theorem summarizes the main result of this section.
\begin{theorem}
\label{thm:oracle-3}
Given a curve $P=(p_1,\ldots,p_n)$ in the plane, one can construct a data structure of size $O(n \log n)$ such that for any 3-vertex query curve $Q$, $\ddF(P,Q)$ can be computed in $O(\log^3 n)$ time.
The same running time can be obtained with a subcurve of $P$ specified at query time.
\end{theorem}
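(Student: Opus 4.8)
The plan is to package the work already done and then extend it to subcurves. First, for the space bound: the tree $\Tfvd$ has $O(\log n)$ levels, and each level stores farthest-neighbor Voronoi diagrams (with point-location structures) of a collection of pairwise-disjoint subcurves of $P$ whose total number of vertices is $n$; since each such diagram together with its point-location structure has size linear in the number of its points, each level contributes $O(n)$ and the whole tree occupies $O(n\log n)$ space. The same bound holds for its nearest-neighbor analog $\Tvd$, so the data structure has size $O(n\log n)$. Correctness of the optimization algorithm and its $O(\log^3 n)$ query time for the full curve $P$ are exactly the content of the preceding lemma, together with Observation~\ref{obs:feasible-3} (feasibility test in $O(\log^2 n)$ time) and Claim~\ref{cl:feasible-3}.

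Next, the subcurve version. The point is that every primitive used by the algorithm localizes to an arbitrary contiguous range $[i,j]$ at the cost of one extra logarithmic factor, precisely as in Section~\ref{sec:oracle-2}. We decompose $[i,j]$ into $O(\log n)$ canonical ranges (nodes of $\Tfvd$ or $\Tvd$); to evaluate $d_{\max}(P[i,\ell],a)$ or $d_{\min}$ of a ``middle'' segment, or to perform a prefix/suffix computation returning $P_{\vdash}(d)$ and $P_{\dashv}(d)$ restricted to $[i,j]$, we query the relevant canonical nodes left-to-right (resp.\ right-to-left), combining partial maxima/minima and stopping at the first node whose canonical subset is not entirely within distance $d$ of $a$ (resp.\ $c$); a binary descent into that node pins down the exact split index. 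Hence the feasibility test of Claim~\ref{cl:feasible-3} runs in $O(\log^2 n)$ time for $P[i,j]$, and so does the computation of $P_{\vdash}(d)$, $P_{\dashv}(d)$, and of any single $d_{\max}$ or $d_{\min}$ value needed in the optimization.

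With these primitives in hand, the optimization algorithm for a query subcurve $P[i,j]$ and $Q=(a,b,c)$ is the one already described, reading $P[i,j]$ for $P$, $P[i,\cdot]$ for $P[1,\cdot]$, and $P[\cdot,j]$ for $P[\cdot,n]$. The binary search for the smallest prefix-feasible distance $d_L$ now ranges over the indices $k\in[i,j]$: the map $k\mapsto d_{\max}(P[i,k],a)$ is monotone nondecreasing and feasibility is monotone in $d$, so the set of $k$ for which $d_{\max}(P[i,k],a)$ is feasible is upward closed and $d_L$ is attained at its smallest element; moreover, whenever $d_{\max}(P[i,k],a)$ is feasible it is in fact prefix-feasible (extending $P[i,k]$ to the longest prefix of $P[i,j]$ within distance $d$ of $a$ does not change the maximum), so this search correctly produces $d_L$ in $O(\log n)$ iterations at $O(\log^2 n)$ each. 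Computing ${\overline d_L}$ (a single farthest query on the prefix of $P[i,j]$ ending just before the vertex realizing $d_L$) and then $d'$ via the procedure of Claim~\ref{cl:feasible-3} applied to ${\overline d_L}$ adds $O(\log^2 n)$, and the symmetric right-to-left part is identical, for a total of $O(\log^3 n)$. Correctness follows from the proof of the preceding lemma verbatim, with $P$ replaced by $P[i,j]$.

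The only real obstacle is the bookkeeping in the second paragraph — verifying that stitching together the $O(\log n)$ canonical pieces faithfully reconstructs $P_{\vdash}(d)\cap P[i,j]$, $P_{\dashv}(d)\cap P[i,j]$, and the associated split indices fed to the feasibility test — but this is exactly the mechanism of Section~\ref{sec:oracle-2} and introduces nothing new beyond the extra logarithmic factor.
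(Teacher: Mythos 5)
Your proposal is correct and follows essentially the same approach as the paper: the theorem is stated there as a summary of the feasibility claim, the $O(\log^2 n)$ feasibility-test observation, the two-sided prefix/suffix optimization algorithm, and its correctness lemma, which are exactly the ingredients you package, together with the standard $O(n\log n)$ size bound for $\Tfvd$ and $\Tvd$. You in fact supply more explicit detail for the subcurve extension (the canonical-range decomposition and the monotonicity argument justifying the localized binary search) than the paper's terse remark that the same running time holds for a query-specified subcurve.
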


\subsection{\texorpdfstring{$k=4$}{k=4}}
\label{sec:oracle-4}

In this section we describe how to preprocess a piecewise-linear curve $P = (p_1,\ldots,p_n)$ so that given a piecewise-linear query curve $Q = (a,b,c,d)$, one can compute the discrete Fr\'echet distance between $Q$ and $P$.  In fact, we first describe how to solve the decision version of the problem and then use Matou\v{s}ek's randomized interpolating search \cite{Mat-search} to compute the actual distance.
Recall that $\disk_R(z)$ is the disk of radius $R$ centered at point $z$.

\subparagraph*{Preprocessing: Data structures.}

\begin{enumerate}
\item $\Tfvd$ and $\Tvd$: see Section \ref{sec:oracle-3}.

\item $\Tds$: Stores preprocessed edges of $P$, so that, for a query pair of indices $\ell,m$ and a distance $R$, one can determine whether there exists an edge $(p_k,p_{k+1})$ within $P[\ell,m]$ with $p_k \in \disk_R(b)$ and $p_{k+1} \in \disk_R(c)$. 

  We form a balanced binary search tree over the vertices of $P$, which, for a query subcurve $P[\ell,m]$, identifies canonical subsets of contiguous vertices of $P$ contained within $P[\ell,m-1]$.  
  For each subset, we build the $\Tedge$ data structure from Section~\ref{sec:graph-ds}.  We slightly modify $\Tedge$ in that a point $p_k$ is treated as only having one neighbor, namely $p_{k+1}$.  
  
  The resulting structure $\Tds$ is then able to support queries of the type: given $(\ell,m,b,c,R)$ is there an edge $(p_k,p_{k+1})$, $\ell \le k \le m-1$, with $p_k \in \disk_R(b)$ and $p_{k+1} \in \disk_R(c)$, or a point $p_k$, $\ell \le k \le m-1$, with $p_k \in \disk_R(b)\cap \disk_R(c)$? The data structure is built in time and space $O^*(n)$ with $O^*(n^{1/2})$ query time, using, e.g., the disk range searching data structure of Agarwal and Matou\v{s}ek~\cite{AgarwalM94}.

\item $\Tannu$: As in Section~\ref{sec:graphs},  this data structure stores the vertices of $P$ preprocessed for \emph{annulus reporting queries}. Given a point $q$ and radii $R_1,R_2$, return the set of all vertices in $P$ lying within the annulus centered at $q$ with inner radius $R_1$ and outer radius $R_2$. 
Here, the tools from \cite{arc-intersection-monster} yield  $O^*(n)$ expected space and construction time and $O^*(n^{1/2}+k)$ query time, where $k$ is the number of answers.

In the algorithm below we need to compute all inter-vertex distances between $P$ and $Q$ lying in the interval $[R_1,R_2]$. We perform four queries, one for the annulus centered at each of the vertices $a,b,c,d$ of $Q$.
\end{enumerate}

\subparagraph*{Decision procedure.}

We start with efficiently answering the question ``Is $\ddF(P,Q)\leq R$, for a given $R\ge\Delta \coloneqq \max\{\|p_1-a\|,\|p_n-d\|\}$?'' 
First we compute $P_{\vdash}(R) = P[1,i]$ and $P_{\dashv}(R) = P[j,n]$, as in Section~\ref{sec:oracle-2}, in $O(\log^2 n)$ time. 
The following two lemmas discuss how the relative positioning of $P[1,i]$ and $P[j,n]$ affects the answer to the decision problem.

\begin{lemma} \label{CaseI}

Suppose $j>i$. Then, $\ddF(P,Q)\leq R$ if and only if at least one of the following conditions holds:
\begin{enumerate}[(a)]
\item $\ddF(P[i+1,n], (b,c,d)) \leq R$
\item $\ddF(P[1,j-1],(a,b,c)) \leq R$
\item $j=i+1$ and $\ddF(p_ip_{i+1}, (b,c)) \leq R$
\end{enumerate}
\end{lemma}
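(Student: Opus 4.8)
### Proof Plan for Lemma~\ref{CaseI}

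The plan is to characterize an optimal walk realizing $\ddF(P,Q)\le R$ by looking at where the ``middle'' vertices $b$ and $c$ of $Q$ are matched relative to the prefix $P[1,i]=P_\vdash(R)$ and the suffix $P[j,n]=P_\dashv(R)$. First I would establish the easy direction: each of conditions (a), (b), (c) gives a walk of cost $\le R$. For (a), concatenate the walk that assigns all of $P[1,i+1]$ — a prefix containing $p_1$ — to $a$ with the witnessing walk of $P[i+1,n]$ against $(b,c,d)$; this is legal since $P[1,i]=P_\vdash(R)$ means every vertex of $P[1,i+1]$ except possibly $p_{i+1}$ is within $R$ of $a$, and $p_{i+1}$ is matched to $b$ in both walks, so the overlap is consistent. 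Wait — more carefully: $P_\vdash(R)=P[1,i]$ is the \emph{longest} prefix all of whose vertices are within $R$ of $a$, so $p_1,\dots,p_i$ are all within $R$ of $a$; we match $p_1,\dots,p_i$ to $a$ and then follow the $(b,c,d)$-walk on $P[i+1,n]$, glueing at the transition from $p_i$ (matched to $a$) to $p_{i+1}$ (matched to $b$). Symmetrically for (b), and (c) is immediate since when $j=i+1$ the only uncovered-by-endpoints portion is the single edge $p_ip_{i+1}$, matched to $(b,c)$.

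For the forward (harder) direction, suppose $\ddF(P,Q)\le R$ and fix an optimal walk $W$. Since $j>i$, the prefix assigned to $a$ is contained in $P[1,i]$ and the suffix assigned to $d$ is contained in $P[j,n]$, so these two blocks do not overlap and there is a nonempty ``middle'' block of vertices $p_{i'},\dots,p_{j'}$ with $i<i'$, $j'<j$ matched to $b$ and/or $c$ (here $i'=$ first index matched to $b$-or-later, $j'=$ last index matched to $c$-or-earlier; one checks $i\ge i'-1$ forces... actually $i' \le i+1$ and $j' \ge j-1$). The key case split is on whether any vertex is matched to $b$ and lies at index $\ge j$, or to $c$ at index $\le i$, or whether the $b$/$c$-block fits between $P[1,i]$ and $P[j,n]$. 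If some vertex $p_k$ with $k\ge i+1$ is matched to $b$, then restricting $W$ to the suffix of $P$ starting where $b$ first appears and keeping $(b,c,d)$ gives a walk witnessing (a) — need to also argue $P[i+1,\cdot]$ up to that first-$b$ vertex can be absorbed, which it can since those vertices were matched to $a$ and lie in $P[1,i]$... hmm, this needs the observation that the first vertex matched to $b$ has index $\le i+1$ (because $P[1,i]$ is exactly the $a$-reachable prefix) — so (a) follows with $P[i+1,n]$. Symmetrically, if some vertex with index $\le j-1$ is matched to $c$, we get (b). The remaining possibility is that \emph{no} index $\ge i+1$ carries $b$ and no index $\le j-1$ carries $c$; then every $b$-vertex has index $\le i$ and every $c$-vertex has index $\ge j$, but $b$-vertices precede $c$-vertices in the walk and $i<j$, so the $b$-block ends at some index $\le i < j \le$ start of $c$-block, forcing the transition $b\to c$ to happen across a single consecutive pair, i.e. $j=i+1$ and the edge $p_ip_{i+1}$ is matched to $(b,c)$ — this is (c).

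The main obstacle I expect is the bookkeeping in the forward direction: precisely pinning down the indices $i',j'$ where $b$ and $c$ first/last appear in the optimal walk, and verifying that in each sub-case the truncated walk is genuinely a legal monotone walk of the truncated curves with cost $\le R$ — in particular checking the glueing at the boundary vertex is consistent (the shared vertex $p_{i+1}$ or $p_i$ is matched to the same point of $Q$ on both sides) and that conditions (a) and (b) really use the subcurves $P[i+1,n]$ and $P[1,j-1]$ rather than slightly larger/smaller ranges. I would handle this by first proving the structural claim ``in any walk of cost $\le R$, the first vertex of $P$ matched to $b$ has index in $\{i+1,\dots\}$ only if it equals... '' — cleanly: the maximal prefix matched to $a$ is some $P[1,s]$ with $s\le i$, and then $p_{s+1}$ is matched to $b$, and $s+1\le i+1$; symmetrically the maximal suffix matched to $d$ is $P[t,n]$ with $t\ge j$ and $p_{t-1}$ matched to $c$, $t-1\ge j-1$. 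With $s+1\le i+1$ in hand, condition (a) reads: the walk restricted to $P[s+1,n]$ against $(b,c,d)$ has cost $\le R$, and since $s+1\le i+1$ we can further prepend $p_{s+1},\dots,p_i$ (all within $R$ of... no, matched to $b$) — actually the cleanest statement is just that $\ddF(P[i+1,n],(b,c,d))\le \ddF(P[s+1,n],(b,c,d))\le R$ using monotonicity of $\ddF$ under taking suffixes when the dropped prefix is matched to the first query vertex; I would state and use exactly that monotonicity fact. Everything else is routine.
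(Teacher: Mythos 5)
Your proof is correct and takes essentially the same approach as the paper's: both directions argue by tracking where $b$ and $c$ are matched in a witnessing walk and restricting to the subcurves $P[i+1,n]$ or $P[1,j-1]$. The paper organizes the forward direction by splitting on $j\ge i+2$ versus $j=i+1$ (analyzing the gap subcurve $P[i+1,j-1]$ in the first case and the pairs involving $p_i,p_{i+1}$ in the second), whereas you split on whether $b$ reaches index $\ge i+1$, $c$ reaches index $\le j-1$, or neither — and derive $j=i+1$ as a forced consequence in the last sub-case; the two organizations are interchangeable and equally rigorous. One small caution: your remark about ``monotonicity of $\ddF$ under taking suffixes'' is not a property of $\ddF$ itself but of a \emph{fixed} walk in which the dropped prefix is matched to the first query vertex; the more careful restriction-of-$W$ argument you sketch just before it is the right justification and is what the paper uses.
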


\begin{proof}

In one direction, if $\ddF(P,Q)\leq R$, then there is a walk $W$ of $P$ and $Q$ of cost at most $R$. Consider the following two cases.

\begin{description}

\item[Case I: $j\geq i+2$.] Consider the subcurve $P[i+1,j-1]$ comprising the gap between $P_{\vdash}(R)$ and $P_{\dashv}(R)$. Clearly, in $W$, $P[i+1,j-1]$ must be matched entirely to vertex~$b$, entirely to vertex~$c$, or to the subcurve~$(b,c)$ of~$Q$. 
If $P[i+1,j-1]$ is matched entirely to~$b$, then $\ddF(P[i+1,n], (b,c,d)) \leq R$. If it is matched entirely to $c$, then $\ddF(P[1,j-1], (a,b,c)) \leq R$, and if it is matched to $(b,c)$, then both $\ddF(P[i+1,n], (b,c,d)) \leq R$ and $\ddF(P[1,j-1], (a,b,c)) \leq R$.
\item[Case II: $j=i+1$.]
We examine the pairs in $W$ involving $p_i$ and $p_j=p_{i+1}$: (i) if $(p_i,c) \in W$, then $\ddF(P[1,j-1], (a,b,c)) \leq R$, (ii) if $(p_{i+1},b) \in W$, then $\ddF(P[i+1,n], (b,c,d)) \leq R$, and (iii) if $(p_i,b) \in W$ and $(p_{i+1},c) \in W$, then $\ddF(p_ip_{i+1}, (b,c)) \leq R$. 
Since $W$ is a monotone walk of cost at most $R$, one of the above three cases must occur.
\end{description}

In the other direction, if $\ddF(P[i+1,n],(b,c,d)) \leq R$, then a walk of cost at most $R$ exists where $a$ is assigned to $P[1,i]$. If $\ddF(P[1,j-1],(a,b,c)) \leq R$, a symmetric walk can be constructed by assigning $d$ to $P[j,n]$. Finally, if $j=i+1$ and $\ddF(p_ip_{i+1}, (b,c)) \leq R$, we can compose a walk of cost at most $R$ where $a$ is assigned to $P[1,i]$, $b$ to $p_i$, $c$ to $p_{i+1}$, and $d$ to $P[i+1,n]$. 
\end{proof}

\begin{lemma} \label{caseII} Suppose $j \leq i$. Then, $\ddF(P,Q)\leq R$ if and only if one of the following holds:
\begin{enumerate}[(a)]
\item $\ddF(P[i+1,n], (b,c,d)) \leq R$
\item $\ddF(P[1,j-1],(a,b,c)) \leq R$
\item there exists a pair of consecutive vertices $p_k,p_{k+1}$ in $P[j-1,i+1]$ with $\ddF(p_kp_{k+1}, (b,c)) \leq R$ or a vertex $p_k$ in $P[j,i]$ with $\ddF(p_k, (b,c)) \leq R$. 
\end{enumerate}
\end{lemma}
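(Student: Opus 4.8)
The proof mirrors that of Lemma~\ref{CaseI}, now exploiting that $P_\vdash(R)=P[1,i]$ and $P_\dashv(R)=P[j,n]$ overlap or abut. I would first record two elementary identities that make condition~(c) concrete: for a single vertex, $\ddF(p_k,(b,c))=\max\{\|p_k-b\|,\|p_k-c\|\}$, and for an edge, $\ddF((p_k,p_{k+1}),(b,c))=\max\{\|p_k-b\|,\|p_{k+1}-c\|\}$, the diagonal walk $(p_k,b)\to(p_{k+1},c)$ being optimal in the second case. Thus (c) asks precisely whether the sub-curve $(b,c)$ can be matched within $R$ either to a single vertex $p_k$ with $j\le k\le i$ or to a single edge $p_kp_{k+1}$ with $j-1\le k\le i$. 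Throughout, $R\ge\Delta$ guarantees $P[1,i]\subseteq\disk_R(a)$ and $P[j,n]\subseteq\disk_R(d)$.

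For the ``if'' direction I would exhibit, for each condition, a walk of $P$ and $Q$ of cost at most $R$. From~(a), a walk $W'$ of $P[i+1,n]$ and $(b,c,d)$, prefixed by $(p_1,a),\dots,(p_i,a)$ and glued to $W'$ by the diagonal step $(p_i,a)\to(p_{i+1},b)$; its cost is the larger of $\mathrm{cost}(W')$ and $\max_{m\le i}\|p_m-a\|$, both at most $R$. Condition~(b) is symmetric. From~(c) at a vertex $p_k$, $j\le k\le i$, the walk $(p_1,a),\dots,(p_k,a),(p_k,b),(p_k,c),(p_k,d),(p_{k+1},d),\dots,(p_n,d)$ works, since $k\le i$ puts $p_k$ in $\disk_R(a)$ and $k\ge j$ puts $p_k,\dots,p_n$ in $\disk_R(d)$; from~(c) at an edge $p_kp_{k+1}$, $j-1\le k\le i$, use $(p_1,a),\dots,(p_k,a),(p_k,b),(p_{k+1},c),(p_{k+1},d),\dots,(p_n,d)$, where now $k\le i$ and $k+1\ge j$ keep all distances at most $R$.

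For the ``only if'' direction I would take a walk $W$ of cost at most $R$ and list the (non-empty, contiguous) blocks it assigns to the four vertices of $Q$: $P[1,e_1]$ to $a$, $P[s_2,e_2]$ to $b$, $P[s_3,e_3]$ to $c$, $P[s_4,n]$ to $d$, consecutive blocks sharing $0$ or $1$ vertex (so $s_2\in\{e_1,e_1+1\}$, $s_3\in\{e_2,e_2+1\}$, $s_4\in\{e_3,e_3+1\}$). Maximality of $P_\vdash(R)$ and $P_\dashv(R)$ forces $e_1\le i$ and $s_4\ge j$, hence also $e_3\ge s_4-1\ge j-1$. Next I would examine the $b$--$c$ seam: it is either a shared vertex $p_{e_2}$ (when $s_3=e_2$), giving $\ddF(p_{e_2},(b,c))\le R$, or an edge $p_{e_2}p_{e_2+1}$ (when $s_3=e_2+1$), giving $\ddF((p_{e_2},p_{e_2+1}),(b,c))\le R$; split on $k\coloneqq e_2$. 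If $k\ge i+1$, then $s_2\le e_1+1\le i+1\le e_2$, so $p_{i+1}$ lies in $b$'s block, and then $c$'s and $d$'s blocks lie entirely within $[i+1,n]$, so restricting $W$ to the pairs with first index $\ge i+1$ is a walk of $P[i+1,n]$ and $(b,c,d)$ of cost at most $R$, i.e.\ condition~(a). If $e_2\le j-1$ (vertex case) or $e_2\le j-2$ (edge case), then $s_3\le j-1\le e_3$, so $p_{j-1}$ lies in $c$'s block, and restricting $W$ to the pairs with first index $\le j-1$ is a walk of $P[1,j-1]$ and $(a,b,c)$ of cost at most $R$, i.e.\ condition~(b). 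In the remaining range, $j\le k\le i$ (vertex case) or $j-1\le k\le i$ (edge case), the seam itself certifies~(c).

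I expect the main difficulty to be the bookkeeping inside the two restriction arguments: checking that when $p_{i+1}$ is matched to $b$ the blocks of $c$ and $d$ are contained in $P[i+1,n]$, so that the restricted pair-sequence is genuinely a walk of $P[i+1,n]$ against $(b,c,d)$ starting at $(p_{i+1},b)$ and ending at $(p_n,d)$, and the mirror statement at the left end --- this is where $e_1\le i$, $s_4\ge j$, $e_3\ge j-1$, and the $\{e,e+1\}$ seam relations are all used. A minor separate point is the degenerate case $i=n$ or $j=1$, where $P[i+1,n]$ or $P[1,j-1]$ is empty and the corresponding condition is vacuous; this is dispatched directly by the same walk constructions taking an empty prefix or suffix.
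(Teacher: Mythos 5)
Your proof is correct and follows essentially the same route as the paper's: the ``if'' direction constructs exactly the same walks (the paper assigns $P[1,k]$ to $a$, $p_k$ to $b$ and $c$, $P[k,n]$ to $d$ in the vertex case, and $P[1,k]$ to $a$, $p_k$ to $b$, $p_{k+1}$ to $c$, $P[k+1,n]$ to $d$ in the edge case), and your ``only if'' split on the seam index $k=e_2$ is equivalent to the paper's split on whether $(p_{i+1},b)\in W$ or $(p_{j-1},c)\in W$, since $s_2\le e_1+1\le i+1$ makes $(p_{i+1},b)\in W$ equivalent to $e_2\ge i+1$. The explicit block-index bookkeeping you add (tracking $e_1,s_2,e_2,s_3,e_3,s_4$ and the $\{e,e+1\}$ seam relations) makes the terse ``Otherwise, for any $k>i$ the pair $(p_k,b)\notin W\ldots$'' step of the paper's proof airtight, but does not change the argument.
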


\begin{proof} 
In one direction, suppose $\ddF(P,Q)\leq R$ and $j \leq i$. This implies there is a walk $W$ of $P$ and $Q$ of cost at most $R$. Now, if $(p_{i+1},b) \in W$, then $\ddF(P[i+1,n], (b,c,d)) \leq R$, and if $(p_{j-1},c) \in W$, then $\ddF(P[1,j-1],(a,b,c)) \leq R$. Otherwise, for any $k > i$, the pair $(p_k,b) \not \in W$, and for any $k < j$, the pair $(p_k,c) \not \in W$. We conclude that there exist two consecutive pairs in $W$, such that, either $(p_k,b),(p_{k+1},c) \in W$, where $j-1 \le k \le i$, thus satisfying the first case in condition (c), or $(p_k,b),(p_{k},c) \in W$, where $j \le k \le i$, thus satisfying the second case of condition (c). 

To argue the other direction, if $\ddF(P[i+1,n], (b,c,d)) \leq R$, then a walk of cost at most $R$ exists mapping $P[1,i]$ to $a$. Symmetrically, if $\ddF(P[1,j-1],(a,b,c)) \leq R$, assigning $P[j,n]$ to $d$ unveils a valid walk. If both (a) and (b) are false, suppose $P[j,i]$ has at least one vertex $p_k$ at most $R$ away from $b$ and $c$, or $P[j-1,i+1]$ has a pair of consecutive vertices $p_k,p_{k+1}$ with $\|p_k-b\|\le R$ and $\|p_{k+1}-c\|\le R$. We show there is a walk where $\ddF(P,Q) \leq R$. In the former case, we assign $a$ to $P[1,k]$, $p_k$ to $b$ and $c$, and $d$ to $P[k,n]$.
In the latter case, we assign $P[1,k]$ to $a$, $p_k$ to $b$, $p_{k+1}$ to $c$, and $P[k+1,n]$ to $d$.
\end{proof}

\subparagraph*{The decision algorithm.}

Check if $\ddF(P[i+1,n], (b,c,d)) \leq R$ or $\ddF(P[1,j-1],(a,b,c)) \leq R$, using the decision algorithm from Section~\ref{sec:oracle-3}.  If so, return \textsc{true}.  If both return \textsc{false} and $j>i$ then if $j=i+1$, run the decision algorithm from Section~\ref{sec:oracle-2} to check if $\ddF(p_ip_{i+1}, bc) \leq R$. If this query returns \textsc{true}, report \textsc{true}.
Otherwise, $j \leq i$ and we must check if there is an edge $p_kp_{k+1}$ within $P[j-1,i+1]$ such that $p_k$ lies in $\disk_R(b)$ and $p_{k+1}$ lies in $\disk_R(c)$ or a point $p_k$ in $P[j,i]$ no more than $R$ away from $b$ and $c$. We do this by querying $\Tds$ with $(j,i,b,c,R)$ in $O^*(n^{1/2})$ time, which performs all checks except for three. The remaining three are is any of the distances $\ddF((p_{j-1},p_j),(b,c))$, $\ddF((p_i,p_{i+1}),(b,c))$, and $\max(d(p_i,b),d(p_i,c))$ at most $R$, and we perform them separately in $O(1)$ time.
If such an edge or vertex is found, return \textsc{true}, otherwise return \textsc{false}.
See the pseudocode in Algorithm~\ref{alg:dec-4}.

\begin{algorithm}[ht]
\begin{algorithmic}
\State Compute $P[1,i]=P_{\vdash}(R)$ and $P[j,n]=P_{\dashv}(R)$ \Comment{Assuming $R\ge\Delta$}

\If{$\ddF(P[i+1,n], (b,c,d)) \leq R \lor
    \ddF(P[1,j-1],(a,b,c)) \leq R$}
    \State return \textsc{true}
\EndIf
\If{$j>i$} \Comment{Prefix and suffix do not overlap}
    \If{$j=i+1$}
        \State return {$\ddF(p_ip_{i+1}, bc) \leq R$}
    \EndIf
\Else  \Comment{Prefix and suffix overlap}
    \State Check $\Tds$ for $p_k$ in $P[j,i]$ in $\disk_R(b) \cap \disk_R(c)$ \Comment{See text description}
    \State ... or for $p_k,p_{k+1}$ within $P[j-1,i+1]$ with $p_k \in \disk_R(b)$ and $p_{k+1} \in \disk_R(c)$
    \If{either test returns \textsc{true}} 
        \State return \textsc{true}
    \Else                                      \State return \textsc{false}
   \EndIf
\EndIf 
\end{algorithmic}
\caption{Decision algorithm for $k=4$}
\label{alg:dec-4}
\end{algorithm}

\subparagraph*{Optimizaton via randomized interpolating search.}

We now address how to find the precise discrete Fr\'echet  distance between $P$ and $Q$. There are $4n$ critical values for the discrete Fr\'echet  distance, namely all the inter-point distances between the vertices of $P$ and $Q$. We cannot afford to list them all, so we use a variant of Matou\v{s}ek's randomized interpolating search, as in Section~\ref{sec:graphs}.

We produce a random sample $S$ of $\sqrt n$ Euclidean distances between pairs of points, one from $P$ and one from $Q$ (from the input lists).  Run the decision procedure for $R = \textrm{median}(S)$. Discard half of $S$ and repeat, until we find the precise discrete Fr\'echet distance 
or identify an interval $I$ of consecutive values of $S$ where the distance lies. We start with $\Delta\coloneqq\max(\|a-p_1\|, \|d-p_n\|)$ as the lower bound for the distance.

By a standard calculation, there are $O(\sqrt n)$ critical values within $I$ in expectation.
We use $\Tannu$ to extract these values in $O^*(n^{1/2})$ time, list the $O(\sqrt n)$ corresponding Euclidean distances, then binary search as above for the answer within this smaller list. 

\subparagraph*{A sublinear query.}
The correctness of our algorithm follows from the correctness of the decision procedure, as the rest is essentially a binary search on critical values. Since the bottleneck is the cost of $O(\log n)$ calls to the decision procedure, the query time is $O^*(n^{1/2})$.
We summarize the main result of this section.

\begin{theorem}\label{thm:oracle-4}
  Given a curve $P=(p_1,\ldots,p_n)$ in the plane, one can construct a data structure of expected size $O^*(n)$ such that for any 4-vertex query curve $Q$, $\ddF(P,Q)$ can be computed in $O^*(n^{1/2})$ time.
  The same running time can be obtained with a subcurve of $P$ specified at query time.
\end{theorem}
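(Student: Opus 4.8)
The plan is to combine a decision procedure --- which, for a given $R\ge\Delta\coloneqq\max\{\|p_1-a\|,\|p_n-d\|\}$, tests whether $\ddF(P,Q)\le R$ in $O^*(n^{1/2})$ time --- with Matou\v{s}ek's randomized interpolating search to recover the exact distance at the price of $O(\log n)$ decision calls plus a single range-reporting query.

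For the decision procedure I would first compute $P_\vdash(R)=P[1,i]$ and $P_\dashv(R)=P[j,n]$ in $O(\log^2 n)$ time using $\Tfvd$, and then branch on the relative order of $i$ and $j$. If $j>i$ (prefix and suffix disjoint), Lemma~\ref{CaseI} reduces feasibility to the two $3$-vertex queries $\ddF(P[i+1,n],(b,c,d))\le R$ and $\ddF(P[1,j-1],(a,b,c))\le R$, answered in $O(\log^3 n)$ time by Theorem~\ref{thm:oracle-3}, together with --- when $j=i+1$ --- the single $2$-vertex query $\ddF(p_ip_{i+1},(b,c))\le R$, answered in $O(\log^2 n)$ time by the oracle of Section~\ref{sec:oracle-2}. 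If $j\le i$ (prefix and suffix overlap), Lemma~\ref{caseII} again reduces to the same two $3$-vertex queries and, in addition, to deciding whether $P[j-1,i+1]$ contains an edge $p_kp_{k+1}$ with $p_k\in\disk_R(b)$ and $p_{k+1}\in\disk_R(c)$, or a vertex $p_k\in P[j,i]$ with $p_k\in\disk_R(b)\cap\disk_R(c)$; this is precisely what $\Tds$ answers on the range $(j,i)$ in $O^*(n^{1/2})$ time, with the two boundary edges $p_{j-1}p_j$ and $p_ip_{i+1}$ and the vertex $p_i$ checked separately in $O(1)$ time. Correctness is exactly Lemmas~\ref{CaseI} and~\ref{caseII}, and since each branch invokes a constant number of polylogarithmic sub-queries and at most one $O^*(n^{1/2})$ query, the decision procedure runs in $O^*(n^{1/2})$ time.

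For the optimization step, observe that $\ddF(P,Q)$ equals one of the $4n$ Euclidean distances between a vertex of $Q$ and a vertex of $P$; since we cannot afford to list them, I would draw a random sample $S$ of $\sqrt n$ of these distances, repeatedly run the decision procedure on $\textrm{median}(S)$ and discard the appropriate half of $S$, starting from the lower bound $\Delta$. After $O(\log n)$ decision calls we either identify $\ddF(P,Q)$ exactly or obtain an interval $I$ delimited by two consecutive sample values; a standard calculation gives that $I$ contains $O(\sqrt n)$ of the $4n$ critical distances in expectation, and we extract them via four annulus-reporting queries on $\Tannu$ (one centered at each vertex of $Q$) in $O^*(n^{1/2})$ expected time, then binary search within the resulting sorted list using the decision procedure. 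The total is $O(\log n)$ decision calls plus $O^*(n^{1/2})$ for extraction, i.e.\ $O^*(n^{1/2})$ expected time. For the space bound: $\Tfvd$ and $\Tvd$ take $O(n\log n)$; $\Tds$ is a balanced binary search tree over the vertices of $P$ whose nodes store $O^*(\cdot)$-sized disk-range-searching structures on their canonical vertex subsets, and since the subset sizes sum to $n$ on each of $O(\log n)$ levels the total is $O^*(n)$; and $\Tannu$ takes $O^*(n)$ expected space. The subcurve version requires no new ideas: every data structure is already parameterized by an index range, the $3$- and $2$-vertex sub-queries accept subcurves of $P$, and $\Tds$ takes the range $(\ell,m)$ as part of the query, so restricting the query to $P[\ell,m]$ simply replaces the global ranges by $[\ell,m]$ throughout, leaving all bounds unchanged.

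I expect the genuine obstacle to be the overlap case $j\le i$: once the prefix matched to $a$ and the suffix matched to $d$ are forced to intersect, the only remaining freedom is to squeeze the subcurve $(b,c)$ into the overlap while keeping the walk monotone, which is why two farthest/nearest-point Voronoi queries no longer suffice and one genuinely needs the super-polylogarithmic structure $\Tds$; the delicate part is the monotonicity bookkeeping underlying Lemma~\ref{caseII} (an edge $p_kp_{k+1}$ with $p_k$ near $b$ \emph{then} $p_{k+1}$ near $c$, or a single vertex near both) together with the careful separate treatment of the boundary edges, whereas the interpolating search layered on top is routine.
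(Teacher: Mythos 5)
Your proposal follows essentially the same route as the paper: compute $P_\vdash(R)$ and $P_\dashv(R)$, branch on whether they overlap, apply Lemmas~\ref{CaseI} and~\ref{caseII} with the 3-vertex and 2-vertex oracles plus a $\Tds$ query (with the same boundary edges $p_{j-1}p_j$, $p_ip_{i+1}$ and vertex $p_i$ handled separately), and lift the decision procedure to optimization via Matou\v{s}ek's randomized interpolating search with $\Tannu$ for annulus reporting. The space and query-time accounting and the subcurve extension also match the paper's argument.
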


\old{
\begin{remark*}
  Alternatively, an efficient range-searching data structure that returns a random point in an annulus can be used to replace randomized interpolating search by a randomized binary search.
\end{remark*}
}

\bibliography{references}

\appendix
\end{document}